\newtheorem{lemma}{Lemma}
\newtheorem{theorem}{Theorem}
\newtheorem{remark}{Remark}
\begin{document}
\title{Effective Energy Efficiency of Ultra-reliable Low Latency Communication} 

\author{
	\IEEEauthorblockN{Mohammad Shehab, Hirley Alves, Eduard A. Jorswieck, Endrit Dosti, and Matti Latva-aho}
	\thanks{M. Shehab, H. Alves, Endrit Dosti, and M. Latva-aho are with Centre for Wireless Communications (CWC), University of Oulu, Finland. Email: firstname.lastname@oulu.fi. Eduard A. Jorswieck is with the Department of Information Theory and Communication Systems, Technische Universtät Braunschweig, Germany, Email: e.jorswieck@tu-bs.de. E. Dosti is with the Department of Signal Processing and Acoustics, Aalto University, Finland. Email: endrit.dosti@aalto.fi} 
	\thanks{This work is partially supported by Academy of Finland 6Genesis Flagship (Grant no. 318927), Aka Project EE-IoT (Grant no. 319008). The work of E. Jorswieck is partly funded by the German Research Foundation (DFG, Deutsche Forschungsgemeinschaft) as part of Germany's Excellence Strategy - EXC 2050/1 - Project ID 390696704 - Cluster of Excellence "Centre for Tactile Internet with Human-in-the-Loop" (CeTI) of Technische Universit{\"a}t Dresden.}
}

%
\maketitle

\vspace{-7mm}
\begin{abstract}
Effective Capacity defines the maximum communication rate subject to a specific delay constraint, while effective energy efficiency (EEE) indicates the ratio between effective capacity and power consumption. We analyze the EEE of ultra-reliable networks operating in the finite blocklength regime. We obtain a closed form approximation for the EEE in quasi-static Nakagami-$m$ (and Rayleigh as sub-case) fading channels as a function of power, error probability, and latency. Furthermore, we characterize the QoS constrained EEE maximization problem for different power consumption models, which shows a significant difference between finite and infinite blocklength coding with respect to EEE and optimal power allocation strategy. As asserted in the literature, achieving ultra-reliability using one transmission consumes huge amount of power, which is not applicable for energy limited IoT devices. In this context, accounting for empty buffer probability in machine type communication (MTC) and extending the maximum delay tolerance jointly enhances the EEE and allows for adaptive retransmission of faulty packets. Our analysis reveals that obtaining the optimum error probability for each transmission by minimizing the non-empty buffer probability approaches EEE optimality, while being analytically tractable via Dinkelbach's algorithm. Furthermore, the results illustrate the power saving and the significant EEE gain attained by applying adaptive retransmission protocols, while sacrificing a limited increase in latency.
\end{abstract}

\begin{IEEEkeywords}
	Effective energy efficiency, finite blocklength, URLLC, IoT, optimal power allocation.
\end{IEEEkeywords}

\section{Introduction}\label{introduction}
The new generations of mobile communication are expected to support a multitude of smart devices interconnected via machine type networks, enabling the Internet of Things (IoT). Energy efficient transmission while guaranteeing quality-of-service (QoS) is an ultimate goal in the design of future wireless networks. QoS constraints ranging from low latency in the order of few milliseconds and packet loss rate ($< 10^{-4}$) are key requirements for \textit{Ultra-Reliable Low Latency Communication} (URLLC) \cite{PopovskiURLLC2019}. In order to boost throughput and reliability while guaranteeing low latency, it becomes crucial to investigate and optimize the resources that are allocated for transmission. In most cases, URLLC devices such remote sensors have limited energy resources which dictates careful planning of throughput maximization with wise energy consumption models \cite{green_IoT}. Furthermore, besides possible health risks of electromagnetic radiation in over populated areas such as city centers \cite{Kurnaz}, the information and communication technology industry is projected to contribute to 6$\%$ of global CO$_2$ emission in 2020 \cite{green}. This urges the invention of low power consumption, green communication schemes, yet able to perform with QoS guarantees.

In order to satisfy extremely low latency in real time applications and emerging technologies such as e-health, industrial IoT, and autonomous vehicles, an attractive solution is communication with short messages \cite{paper3}. Therein, the lengths of the packets to be communicated are short, but their importance is extremely high. When the packets are short and delay requirements are stringent, performance metrics, such as Shannon capacity or outage capacity, provide a poor benchmark, and therefore, fundamentally new approaches are needed \cite {paper5,short_ur}. In this context, the maximum achievable rate of finite blocklength packets was defined in \cite{polyanskiy} as a function of blocklength and error probability.

As envisaged by \cite{mehdi}, the design of URLLC focuses on the tail distributions of reliability and latency instead of average metrics. Here arises the challenge of how to incorporate energy efficiency with the data rates, delay, and reliability requirements imposed by the International Telecommunication Union (ITU) and for MTC towards 6G. In this sense, metrics such as effective capacity (EC) and effective energy efficiency (EEE) are meant to capture tail statistical delay requirements in parallel with transmission throughput. 

\subsection{Related Work}
The effective capacity metric was first introduced in \cite{paper6} to guarantee statistical QoS requirements by capturing the physical and link layers aspects. It maps the maximum arrival rate that can be supported by a network with a maximum delay bound of $\delta$ and a delay outage probability. Unlike the the Delay-Sensitive Area Spectral Efficiency metric which only accounts for the transmission delay \cite{DSA}, the EC accounts for the statistical QoS aspect in terms of delay outage probability and the maximum delay bound. In \cite{paper5}, Gursoy characterized the EC in bits per channel use (bpcu) for short packets in quasi-static fading channels where the channel coefficients remain constant for the whole time spanning one packet transmission. Moreover, in \cite{Gursoy2}, Gursoy et al. extended their analysis to multiple users but without considering the power consumption and energy efficiency aspects. Meanwhile, the per-node EC in massive MTC networks was studied in \cite{eucnc} proposing three methods to alleviate interference namely power control, graceful degradation of delay constraint and the hybrid method. 

Effective energy efficiency is defined as the ratio between effective capacity and the total power consumption. In this sense, EEE captures the interplay between power, delay, and reliability and thus, fits well for dealing with the inherent energy-limited and bursty traffic scenarios in MTC characterized by URLLC. The maximization of EEE is of great importance for green IoT, where the goal is to maximize the throughput for each consumed unit of power. The EEE can be used as a measure of how efficient an IoT system is in terms of power consumption and energy saving. This is a very useful metric in the study and design of remote IoT devices that run on installed batteries with limited energy supply and are required to communicate with URLLC requirements. We refer to the novel factors that affect energy efficiency in MTC, which are strict delay and error constraints, bursty traffic, empty buffer probability, and communication on finite blocklength packets. In \cite{paper7}, the empty buffer probability (EBP) model was considered as an EEE booster for long packets transmission. The energy efficiency gap which results from utilizing finite blocklength packets and the optimum power allocation in this case were characterized in \cite{iswcs_eee} for Rayleigh fading channel. The trade-off between EEE and EC was studied in \cite{paper8}, where the authors suggested an algorithm to maximize the EC subject to EEE constraint. However, the probability of transmission error that appears in finite blocklength communication due to imperfect coding was not considered. The authors of \cite{paper20} showed that the relation between EEE and delay in wireless systems is not always a trade-off. They concluded that a linear relation between service rate and power consumption leads to an EEE-delay non-trade-off region.

On the other hand, it has been well-established that achieving ultra reliability requires the utilization of diversity schemes such as ARQ retransmission protocols. The utilization of this family of protocols has been embraced by several up to date systems such as 5G NR \cite{3GPP}. In this context, the authors of \cite{EC_rtx} discussed the EC of ARQ schemes for matrix-exponential distributed fading channels. They suggested that exploiting spatial diversity as the case in MIMO would reduce the sensitivity of EC to variations in the delay exponent $\theta$. However, only the work in \cite{ARQ_FB} studied the finite blocklength effect in ARQ Assisted URLLLC but without accounting power consumption as in the EEE metric.

\subsection{Contributions} 
In this work, we build upon \cite{iswcs_eee} and propose a finite blocklength model for the EEE in quasi-static Nakagami-$m$ fading assuming a linear power consumption model. We characterize the optimum power allocation strategy that maximizes the EEE. We account for the EBP and in power consumption model  and prove that this model is valid for short packets. Our analysis indicates that considering EBP with short packets allows for a more precise characterization of the EC and EEE. Results show that higher EC and EEE are obtained under EBP in contrast to the full buffer scenario. Besides, we illustrate the EEE gap between infinite and the finite blocklength models, and highlight the effect of network congestion on the EEE performance. We analyse how the optimum power allocation is affected by limiting the packet length and the performance gap that appears accordingly for different types of fading. In addition, we evaluate the trade-off between the EEE and the delay tolerance.

A key contribution of this work is that we exploit the non-EBP model by incorporating retransmission of faulty packets in the instants when the buffer is empty, which renders an EBP-ARQ scheme. We evaluate the EEE of the proposed  scheme and also compare this case to the basic EBP model. The results show that this scenario grants a significant improvement in the EEE and reduction in the power consumption. Our analysis characterizes the optimum transmission error probability for each transmission so that the global EEE is maximized and the power consumption is minimized. The average latency is also analyzed for this scenario where the results indicate a very limited increase in latency as a cost of the high gain in EEE.  

The contributions of this work are summarized as follows:
\begin{itemize}
	\item We derive closed form approximation for the EEE under quasi-static fading.
	\item We characterize the optimum power allocation that maximizes the EEE\footnote{It is worth noticing that in \cite{iswcs_eee} we evaluated only numerically the optimal power allocation that maximizes the EEE under Rayleigh fading and without any assumptions on the EBP.} for linear and EBP power consumption models and highlight the trade-off between EEE and latency.  
	\item We show the performance and optimal power allocation gap that results from utilizing short packets when compared to finite blocklength. Unlike \cite{TVT_Shehab}, where we maximize effective capacity via optimal power allocation under Rayleigh fading while neglecting energy consumption of the devices, herein we analyze the optimal power allocation for different Nakagami-$m$ fading setups.
	\item We present a basic framework for applying ARQ by considering retransmission of faulty packets in the EBP model. A buffer aware strategy is adopted to allow for rate adaption when the buffer is empty in the time slot following the current transmission. The proposed framework allows for a significant rate gain and renders a significant boost in the EEE, while maintaining a limited increase in average latency.
	\item We solve the optimization problem for optimum error allocation at each transmission with a target reliability constraint via low complexity Dinkelbach's algorithm. The solution shows that minimizing the non-EBP jointly reduces the power consumption and maximizes the EEE.
\end{itemize}

\subsection{Outline}
The rest of the paper is organized as follows: in Section \ref {system model}, we introduce the system model and clarify the relation between EC and EEE. Next, Section \ref{EEE_finite blocklength} presents the EEE analysis in Nakagami-$m$ quasi-static fading and characterizes optimum error and power allocation for the linear power consumption model. We illustrate the EBP model in finite blocklength transmission and characterize the EEE maximization with reliability, latency, and power consumption constraints in Section \ref{EBP}. After that, Section \ref{retransmission} studies the retransmission of faulty packets in the empty buffer instants, while the results are discussed in Section \ref{results}. Finally, Section \ref{conclusion} concludes the paper. To make the paper more tractable, we summarize the key abbreviations and symbols that will appear throughout the paper in Table \ref{t1}.
\begin{table}[!t]
	\centering
	\caption{Important abbreviations and symbols.}
	\label{t1}
	\renewcommand{\arraystretch}{1.3}
	\begin{tabular}{p{1.5cm} p{5cm}}
		\hline

		bpcu & bits per channel use \\
		max & maximize  \\
		NBP & non-empty buffer probability \\
		s.t & subject to    \\
		\vspace{8mm} \\

		$m$ & fading parameter \\
		$n$ & blocklength \\
		$P_t$ & power consumption \\
		$p_{nb}$ & non-empty buffer probability  \\
		$r$ & normalized achievable rate \\
		
		\vspace{8mm} \\
		 
	    $C_e$ & effective capacity \\
	    $\delta$ & maximum delay  \\
	    $\mathbb{E}[ \ ]$ & expectation of \\
	    $\Lambda$ & delay outage probability   \\
	    $Q(x)$ & Gaussian Q-function

	    \vspace{8mm} \\
	    $\theta$ & delay exponent \\
	    $\epsilon_t$ & target error probability \\
	    $\epsilon^*$ & optimum error probability \\
	    $\eta_{ee}$ & effective energy efficiency\\
	    $\lambda$ & arrival rate \\
	    $\rho$ & average signal-to-noise ratio \\
	    $\zeta$ & inverse drain efficiency \\
	 \hline
	\end{tabular} 
\end{table}
\section{System model and preliminaries} \label{system model}
We consider a communication scenario in which an energy-limited sensor transmits data to a common aggregator through a quasi-static Nakagami-$m$ fading channel. The received vector $\mathbf {y}\in \mathbb{C}^n$ is
\begin{align}\label{eq1}
\mathbf {y}=h\mathbf {x}+\mathbf {w},
\end{align}
where $\mathbf {x} \in \mathbb{C}^n$ is the transmitted packet, and the block flat-fading coefficient is denoted by $h \in \mathbb{C}$ which is assumed to be independent and identically distributed (i.i.d). This implies that $h$ remains constant over the blocklength $n$, but changes from block to block. The  blocklength is assumed to be smaller than the channel's coherence time. Lastly, $\mathbf{w}$ is the additive complex Gaussian noise vector whose entries are of unit variance. We assume that CSI is available at each node. 
%
Note that CSI acquisition at the transmitter in the URLLC setup is  feasible  whenever the channel state remains constant over multiple symbols\footnote{Channel estimation is a cumbersome task for conventional systems and also under URLLC constraints. However, estimates can be reliably acquired via feedback channel in FDD system, or by exploiting channel reciprocity in TDD in line with channel inversion power control methods. CSI is obtained at reception if the latency constraint allows additional overhead due to pilot based transmissions, or alternatively via non-coherent transmissions \cite{PopovskiURLLC2019}.}. In most communication environments, the channel coherence time is much larger than the URLLC transmissions in mini-slots of duration 0.1 ms, and thus spans of multiple TTI. This gives the transmitter sufficient time to perfectly acquire CSI \cite{CSI_Onel}. Recent machine learning tools facilitate this task specially if the channel coefficients are highly correlated within a short period of time. In this case, the transmitter node can exploit a recently received signal from the other node or request a training sequence in order to estimate the channel \cite{CSIT}. Additionally, as in \cite{paper3}, we aim to provide a performance benchmark for energy efficiency of these networks, where the effect of imperfect CSI is beyond the scope of our work.

\subsection{Communication at Finite Blocklength}
In finite blocklength transmission, unlike Shannon's model, short packets are conveyed at rate that depends on the blocklength $n$ and the packet error probability $\epsilon \in\left[ 0,1\right]$, which is small but not vanishing. The normalized achievable rate, in (bpcu), is \cite{paper5}
\begin{align}\label{eq3}
\begin{split}
r(\rho)\!\approx&\!\log_2(1\!+\!\rho|h|^2) 
\!-\!\frac{Q^{-1}
	(\epsilon)\log_2(e)}{\sqrt{n}}\!\sqrt{1\!-\!\frac{1}{\left(1\!+\!\rho|h|^2\right)^{2}} } ,	
\end{split}
\end{align}
where $Q(\cdot)=\int_{\cdot}^{\infty}\frac{1}{\sqrt{2 \pi}}e^{\frac{-t^2}{2}} \mathrm{d}t$ is the Gaussian Q-function, and $Q^{-1} (\cdot)$ represents its inverse, $\rho$ is the average SNR which frankly represents the transmit power in watts since the noise is assumed to be normalized to unity. and $|h|^2$ is the envelope of the channel coefficients. The fading coefficients are presented by the random variable $Z=|h|^2$, which is gamma distributed with probability density function given as \cite[Eq. (2.21)]{alouini},\cite{paper9}
\begin{align}\label{pdf}
f_Z(z)=\frac{m^m z^{m-1}}{\Gamma (m)}e^{-m z}, 
\end{align}
where low values of $m$ mark severe fading, high values of $m$ mark the presence of line of sight (LOS) and $m=1$ represents Rayleigh fading. 

\subsection{The relation between Effective Capacity and Effective Energy Efficiency} \label{EC_EEE} 
For low latency communication, effective capacity ($C_e$) is a powerful metric that characterizes the relation between the communication rate and the tail distribution of the packet delay violation probability \cite{mehdi}. For relatively large delay of multiple symbol periods, packet delay violation occurs when a packet delay exceeds a maximum delay bound $\delta$ and the outage probability is defined as \cite{paper6}
\begin{align}\label{delay}
\Lambda=\textrm{Pr}(delay \geq \delta) \approx e^{-\theta \cdot C_e \cdot \delta},
\end{align}	
where $\Pr(\cdot)$ denotes the probability of a certain event. Conventionally, the tolerance of a system to long delays is measured by the delay exponent $\theta$. The system tolerates large delays for small values of $\theta$ (i.e., $\theta\rightarrow 0$), and it becomes stricter delay-wise for large values of $\theta$. As an exemplary scenario, consider 5G NR numerology 1 with symbol period of 35.7 micro-seconds, effective capacity of 1 bpcu and $\theta=0.01$. For a delay outage probability of $\Lambda=10^{-5}$ (i.e, $99.999\%$ reliability, the network can tolerate a maximum delay of $\delta=1151$ symbol periods ($\approx 41$ ms) for $\theta=0.01$, and $\delta=115$ symbol periods ($\approx 4.1$ ms) when $\theta=0.1$.
From \cite{paper5}, the EC in bpcu is 
\begin{align}\label{EC}
C_e(\rho,\theta,\epsilon)=-\frac{\ln\ \psi(\rho,\theta,\epsilon)}{n\theta}, 
\end{align} 
where \vspace{-3mm}
\begin{align}\label{psi} \psi(\rho,\theta,\epsilon)= \mathop{\mathbb{E}_{Z}}\left[\epsilon+(1-\epsilon)e^{-n\theta r(\rho)}\right],
\end{align}
and $\ln$ is the natural logarithm, $r(\rho)$ comes from \eqref{eq3} and $\mathop{\mathbb{E}_{Z}}[\cdot]$ denotes expectation over the fading. The above equations assumes an underlying simple ARQ process and indicate that higher service rates reduce the amount of data bits stored in the queue, and hence also reduce the delay required to transmit, which boosts the EC. Thus, the EC is a measure of the throughput while statistically guaranteeing the delay tolerance. 

\begin{remark}
Note that, we resort to the more practical concept of service rate as in \cite{paper5} rather than the departure rate model in \cite{paper6}. The intuition is that the service rate metric is more suitable for the characterization of the re-invented effective capacity (or effective rate) for discrete short packets operating in the finite blocklength regime. Herein, we consider short packets with a packet drop probability of $\epsilon$ to measure reliability. Reliability could be well mapped via service rate rather than departure rate which does not account for the dropped packets with probability $\epsilon$.  This model allows us to combine the latency and reliability aspects and accommodate them in the characterization of QoS constrained energy efficiency. 
\end{remark}

In \cite{ paper5}, the effective capacity is studied for single node scenario, but never to a closed form expression. It has been proven that the EC is concave in $\epsilon$, and hence has a unique global optimum. From \eqref{eq3} and \eqref{EC}, we observe that increasing the transmission power would definitely raise the EC. However, this comes at the expense of increased power consumption, which is not suitable for energy-limited (battery-operated) IoT devices. Thus, it is necessary to study the trade-off between enhancing EC and power consumption. 

In this context, effective energy efficiency is defined as the achieved effective capacity per unit of consumed power. The EEE captures the trade-off between the throughput of the communication link, the overall power consumption, and latency. Thus, EEE is a suitable metric to quantify and optimize the throughput of the communication link per each consumed watt for energy-limited, low latency IoT. Hence, the optimization of EEE is of great importance for IoT devices, which are isolated from stationary power sources and are required to deliver packets with low latency in the order of milli-seconds \cite{Petreska2016}. In what follows, we study the EEE for short packet communication.

\section{Effective Energy Efficiency under linear power consumption model} \label{EEE_finite blocklength}
In our analysis, we resort to a linear power consumption model 
defined as \cite{Helmy}
\begin{equation}\label{Pt}
P_t(\rho)=\zeta \rho+P_c ,
\end{equation}
with $\zeta\geq 1$ being the inverse drain efficiency of the transmit amplifier and $P_c$ the hardware power dissipated in circuit in watts. 
The linear power consumption model is a well-established and accepted model that has been widely used in many studies related to energy efficiency of wireless systems such as \cite{EEEbook,paper7,Helmy,schober}. This model captures the linear increase of the power consumption as a function of the transmit power and the inverse drain efficiency as well as the idle circuit power consumption. Furthermore, it facilitates the analysis that aims at providing a performance benchmark for the EEE in the context of short packet and low latency communication of IoT devices. 

For this model, the EEE is given by
\begin{equation}\label{EEE0}
\eta_{ee}=\frac{-\frac{1}{n\theta} \ln\left(\mathop{\mathbb{E}_{Z}}\left[\epsilon+(1-\epsilon)e^{-n\theta r}\right]\right) }{\zeta \rho+P_c}. 
\end{equation} 
This scenario assumes an always full buffer and does not account for EBP. In \cite{paper5}, a stochastic model for EC was studied, but never to a closed form expression. Herein, we present a tight approximation for the EC and hence, the EEE.
\begin{lemma} \label{lemma 1}
	The effective capacity in Nakagami-$m$ quasi-static fading is approximated by
	\begin{align}\label{general}
	\begin{split}
	C_e(\rho,\theta,\epsilon)&\approx-\frac{1}{n\theta} \ln\left[\epsilon +(1-\epsilon)\frac{m^m}{\Gamma (m)} \right.\\
	 &\cdot \left. \sum_{n=0}^2\frac{\beta^n}{n!}\int_{0}^{\infty}
	(1+\rho z)^{\alpha} \gamma^n z^{m-1} e^{-mz} dz \right],
	\end{split}
	\end{align} 
	where $\alpha=\frac{-\theta n}{\ln 2}$, $\beta=\theta \sqrt{n} Q^{-1}(\epsilon)\log_2e$, and $\gamma=\sqrt{1-\frac{1}{(1+\rho z)^{2}}}$.
\end{lemma}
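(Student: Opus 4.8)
The plan is to start from the definitions in~\eqref{EC} and~\eqref{psi} and reduce the fading expectation to the stated integral form by an elementary algebraic rearrangement followed by a short Taylor truncation. First I would substitute the finite blocklength rate~\eqref{eq3} into the exponent $-n\theta r(\rho)$. The logarithmic term contributes $-n\theta\log_2(1+\rho z)=\alpha\ln(1+\rho z)$ with $\alpha=-\theta n/\ln 2$, so that $e^{-n\theta\log_2(1+\rho z)}=(1+\rho z)^{\alpha}$; the channel-dispersion term contributes $+\theta\sqrt{n}\,Q^{-1}(\epsilon)\log_2 e\cdot\gamma=\beta\gamma$ with $\gamma=\sqrt{1-(1+\rho z)^{-2}}$ and $\beta=\theta\sqrt{n}\,Q^{-1}(\epsilon)\log_2 e$. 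Hence the integrand factors cleanly as $e^{-n\theta r}=(1+\rho z)^{\alpha}e^{\beta\gamma}$, and the only quantity left to handle is $\mathbb{E}_Z[(1+\rho z)^{\alpha}e^{\beta\gamma}]$ taken against the gamma density~\eqref{pdf}.

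The single approximation enters through the factor $e^{\beta\gamma}$. Since $\gamma\in[0,1)$ is bounded and, in the URLLC regime of interest, the delay exponent $\theta$ is small so that $\beta$ stays moderate, I would expand $e^{\beta\gamma}$ in its Maclaurin series and retain terms up to second order, $e^{\beta\gamma}\approx\sum_{k=0}^{2}\beta^{k}\gamma^{k}/k!$ (this is the finite sum written with index $n$ in~\eqref{general}). Because it is a finite sum, it commutes with the expectation, and distributing it over the integral yields exactly the claimed expression, with the $k$-th term carrying the factor $\beta^{k}/k!$ multiplying $\int_{0}^{\infty}(1+\rho z)^{\alpha}\gamma^{k}z^{m-1}e^{-mz}\,dz$. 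Pulling the constant $m^{m}/\Gamma(m)$ out of the density and inserting $\psi=\epsilon+(1-\epsilon)\,\mathbb{E}_Z[e^{-n\theta r}]$ back into~\eqref{EC} then completes the derivation.

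The main obstacle is not the algebra, which is routine, but justifying that truncating at second order is legitimate and tight. I would argue this by bounding the neglected tail $\sum_{k\ge 3}(\beta\gamma)^{k}/k!$ using $\gamma<1$ together with the smallness of $\beta$ in the low-$\theta$ regime, noting that the leading omitted term scales as $\beta^{3}/6$; for the parameter ranges considered (small $\theta$, error floors near $10^{-5}$, short blocklengths) this remainder is negligible, which is precisely what the approximation sign in~\eqref{general} records. A secondary point worth flagging is that the resulting integrals are themselves non-elementary owing to the $\gamma^{k}$ factor, so~\eqref{general} is a \emph{closed form} only in the sense of reducing the effective capacity to a fixed finite combination of such integrals, to be evaluated in the Rayleigh and Nakagami-$m$ sub-cases that follow.
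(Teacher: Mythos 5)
Your proposal is correct and follows essentially the same route as the paper's Appendix A: substitute the finite-blocklength rate into $e^{-n\theta r}$ to factor it as $(1+\rho z)^{\alpha}e^{\beta\gamma}$, expand $e^{\beta\gamma}$ in its Maclaurin series, truncate after three terms, and exchange the finite sum with the gamma-weighted integral. The only difference is in how the truncation is justified --- you propose an analytic tail bound via $\gamma<1$ and small $\beta$, whereas the paper simply reports the empirical accuracy of retaining one, two, and three terms --- but this does not change the argument.
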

\begin{proof}
Please refer to Appendix A
\end{proof}
\begin{remark}
It is quite straightforward to conclude that EEE is an increasing function of the fading parameter $m$. That is, the EEE becomes worse when the fading becomes more severe (i.e, $m\rightarrow \tfrac{1}{2}$). Hence, starting from here, we focus our analysis on quasi-static Rayleigh fading to provide a benchmark of the EC and EEE in the proposed scenarios, where the results can also be extended to any type of Nakagami-$m$ fading. However, Lemma \ref{lemma 1} facilitates the following derivation of the EEE in Rayleigh fading and later, comparing the optimum power allocation in each fading scenario.
\end{remark}
%
\begin{theorem}
\label{lemma 2}
	For a Rayleigh quasi-static fading channel with blocklength $n$, the EEE of the linear power consumption model is approximated as
	\begin{align}\label{Rayleigh}
	\begin{split}
	\eta_{ee}(\rho,\theta,\epsilon)\approx&-\frac{\ln \left[ \epsilon+(1-\epsilon) \ \mathcal{J}\right]}{n \theta \left( \zeta \rho+P_c\right) },  
	\end{split}
	\end{align}	
	where 
	\begin{align}\label{c2.2}
	\mathcal{J}&\!\!=\!e^{\frac{1}{\rho}}\! \rho^\alpha  \!\left[\!\vphantom{\!\frac{\!\Gamma\!\left(\!\alpha\!-\!1,\!\frac{1}{\rho}\!\right) }{\rho^{2}}}\! \left( \!\frac{\beta^2}{2}\!+\!\beta\!+\!1\right)\!  \Gamma\!\left(\!\alpha\!+\!1,\frac{1}{\rho}\! \right) 
	\!-\!\left(\!\frac{\beta^2}{2}\!+\!\beta\right)\!\frac{\!\Gamma\left(\alpha\!-\!1,\frac{1}{\rho}\right) }{\rho^{2}}\!\right]\!,  
	\end{align} 		
	where $\Gamma \left(\cdot,\cdot\right)$ is the upper incomplete gamma function \cite{Abramowitz}. 
\end{theorem}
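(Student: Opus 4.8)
The plan is to specialize the Nakagami-$m$ expression of Lemma~\ref{lemma 1} to the Rayleigh case $m=1$ and then evaluate the three residual integrals in closed form. Setting $m=1$ in \eqref{general} collapses the prefactor $\tfrac{m^m}{\Gamma(m)}$ to unity and the weight $z^{m-1}$ to $1$, so the fading density \eqref{pdf} reduces to $e^{-z}$. Writing the truncated-expansion index as $k$ to avoid clashing with the blocklength $n$, the bracketed quantity that I will call $\mathcal{J}$ becomes $\mathcal{J}=I_0+\beta I_1+\tfrac{\beta^2}{2}I_2$, where $I_k=\int_0^\infty (1+\rho z)^{\alpha}\gamma^{k}e^{-z}\,dz$ and $\gamma=\sqrt{1-(1+\rho z)^{-2}}$. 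The whole task thus reduces to computing $I_0$, $I_1$, and $I_2$.

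For the even-power integrals $I_0$ and $I_2$ I would apply the substitution $t=1+\rho z$ followed by $s=t/\rho$, which turns $\int_0^\infty(1+\rho z)^{a}e^{-z}\,dz$ into $e^{1/\rho}\rho^{a}\int_{1/\rho}^\infty s^{a}e^{-s}\,ds = e^{1/\rho}\rho^{a}\,\Gamma(a+1,\tfrac{1}{\rho})$, by the definition of the upper incomplete gamma function. Applied with $a=\alpha$ this gives $I_0=e^{1/\rho}\rho^{\alpha}\Gamma(\alpha+1,\tfrac1\rho)$. For $I_2$ I would use $\gamma^2=1-(1+\rho z)^{-2}$ to split the integrand into the $a=\alpha$ term and an $a=\alpha-2$ term, the latter contributing $e^{1/\rho}\rho^{\alpha-2}\Gamma(\alpha-1,\tfrac1\rho)$, so that $I_2=e^{1/\rho}\rho^{\alpha}\big[\Gamma(\alpha+1,\tfrac1\rho)-\rho^{-2}\Gamma(\alpha-1,\tfrac1\rho)\big]$.

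The genuine obstacle is the odd-power integral $I_1$, because $\gamma^{1}=\sqrt{1-(1+\rho z)^{-2}}$ carries a square root that does not integrate against $e^{-z}$ into elementary or incomplete-gamma form. The resolution I would adopt is to replace $\gamma$ by $\gamma^{2}$ inside $I_1$, i.e. $\sqrt{1-(1+\rho z)^{-2}}\approx 1-(1+\rho z)^{-2}$. This is accurate in the moderate-to-high SNR regime of interest, where $(1+\rho z)^{-2}\ll 1$ and both expressions are close to unity; it makes $I_1\approx I_2$ and is the only non-exact step in the derivation. Substituting $I_1\approx I_2$ and collecting the coefficients of $\beta^0,\beta^1,\beta^2$ then yields $\mathcal{J}=e^{1/\rho}\rho^{\alpha}\big[(\tfrac{\beta^2}{2}+\beta+1)\Gamma(\alpha+1,\tfrac1\rho)-(\tfrac{\beta^2}{2}+\beta)\rho^{-2}\Gamma(\alpha-1,\tfrac1\rho)\big]$, which is exactly \eqref{c2.2}.

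Finally, inserting this $\mathcal{J}$ into $C_e=-\tfrac{1}{n\theta}\ln[\epsilon+(1-\epsilon)\mathcal{J}]$ and dividing by the linear power consumption $P_t=\zeta\rho+P_c$ of \eqref{Pt}, as dictated by the definition $\eta_{ee}=C_e/P_t$ in \eqref{EEE0}, delivers \eqref{Rayleigh}. I expect the bookkeeping of the two substitutions and the $\beta$-grouping to be routine; the single conceptual decision is the $\gamma\approx\gamma^2$ approximation that makes $I_1$ tractable.
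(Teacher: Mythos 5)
Your proposal is correct and follows essentially the same route as the paper's Appendix B: specialize Lemma \ref{lemma 1} to $m=1$, reduce $\psi$ to $\epsilon+(1-\epsilon)(I_0+\beta I_1+\tfrac{\beta^2}{2}I_2)$, evaluate the integrals in closed form as upper incomplete gamma functions via the substitution $s=(1+\rho z)/\rho$, and then divide the resulting $C_e$ by $\zeta\rho+P_c$. The only divergence is the treatment of $I_1$: the paper invokes the expansion $\gamma\approx 1-\tfrac{1}{2}(1+\rho z)^{-2}$, which taken literally would place a coefficient $\beta/2$ (not $\beta$) on the $\Gamma\left(\alpha-1,\tfrac{1}{\rho}\right)/\rho^2$ term, whereas your choice $\gamma\approx\gamma^2$ (i.e.\ $I_1\approx I_2$) reproduces the coefficient $\tfrac{\beta^2}{2}+\beta$ stated in \eqref{c2.2} exactly, so your bookkeeping is actually the more self-consistent of the two.
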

\begin{proof} Please refer to Appendix B.
\end{proof}
\begin{remark} \label{lemma 3}
It was shown in \cite{iswcs_eee} that both EC and EEE are concave functions of the error probability $\epsilon$, and the optimum value of $\epsilon$ that maximizes them in this case is given by 
 	\begin{align}\label{e*}
 	\begin{split}
 	\epsilon^*(\rho,\alpha,\beta)\approx\arg\min_{0 \leq \epsilon \leq 1} \  \epsilon+(1-\epsilon) \ \mathcal{J}.
 	\end{split}
 	\end{align} 
\end{remark}
In what follows, we study the behaviour of EEE as a function of transmit SNR $\rho$.

\begin{theorem} \label{lemma quasi-concavity}
The EEE function in Theorem \ref{lemma 2} is a quasi-concave function of the transmit SNR.
\end{theorem}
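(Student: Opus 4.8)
The plan is to exploit the fractional structure of the metric. By \eqref{Rayleigh} we may write $\eta_{ee}(\rho)=C_e(\rho)/P_t(\rho)$, where the numerator is the effective capacity from \eqref{EC} and the denominator is the affine, strictly positive power model $P_t(\rho)=\zeta\rho+P_c$ of \eqref{Pt}. Since a nonnegative concave function divided by a positive affine function is quasi-concave, the cleanest route would be to prove that $C_e$ is concave and nonnegative in $\rho$ and then invoke this fractional-programming fact. Nonnegativity is immediate on the operating regime $r\geq 0$, and concavity of $C_e$ is equivalent, through \eqref{EC}, to log-convexity of $\psi(\rho)$ in \eqref{psi}: writing $\psi=\epsilon+(1-\epsilon)\mathbb{E}_Z[e^{-n\theta r(\rho,Z)}]$ and noting that $e^{-n\theta r}$ is log-convex in $\rho$ exactly when the instantaneous rate $r(\rho,Z)$ is concave in $\rho$, I would use that a positive constant plus an expectation of log-convex integrands is again log-convex.

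The main obstacle lies precisely in the concavity of $r(\rho,Z)$. First I would differentiate \eqref{eq3} twice in $\rho$: the Shannon term $\log_2(1+\rho z)$ is concave, but the channel-dispersion penalty $\sqrt{1-(1+\rho z)^{-2}}$ behaves like $\sqrt{2z\rho}$ as $\rho\to 0$, so its negation contributes an unbounded positive curvature and $r$ is in fact convex at low SNR. Hence global concavity of $C_e$ fails near the origin and the naive ratio argument cannot be applied verbatim; this low-SNR region is, however, where the approximation \eqref{eq3} ceases to be meaningful (indeed $r<0$ there).

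To obtain a rigorous statement I would therefore switch to the second-order characterization of univariate quasi-concavity: a smooth function on an interval is quasi-concave provided every stationary point is a local maximum. Differentiating $\eta_{ee}=C_e/P_t$ and using $P_t''=0$, at any stationary point $\rho^\star$ (where $C_e'(\rho^\star)P_t(\rho^\star)=C_e(\rho^\star)\zeta$) the cross terms cancel and one finds $\eta_{ee}''(\rho^\star)=C_e''(\rho^\star)/P_t(\rho^\star)$. Because $P_t>0$ and, at a stationary point, $C_e'(\rho^\star)=\zeta\,\eta_{ee}(\rho^\star)>0$, the whole claim reduces to showing that $C_e$ is concave wherever it is increasing, i.e. $C_e''(\rho^\star)\le 0$ at every critical point of the ratio. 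The remaining and genuinely technical step is to verify this sign using the closed form \eqref{c2.2}: I would express $C_e'$ and $C_e''$ through the derivatives of $\mathcal{J}$ (which reduce to incomplete-gamma recurrences), restrict to the meaningful regime $r\geq 0$, and argue that the convex low-SNR region carries no stationary point of $\eta_{ee}$ while the increasing-concave region does. Establishing this single sign condition, rather than global concavity of $C_e$, is the crux of the argument; once it holds, unimodality and hence quasi-concavity follow immediately, consistent with the boundary behaviour $\eta_{ee}(0)=0$ and $\eta_{ee}(\rho)\to 0$ as $\rho\to\infty$.
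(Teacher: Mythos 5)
Your proposal follows the same backbone as the paper's Appendix C: establish concavity of the finite-blocklength rate $r(\rho)$ in \eqref{eq3} and then pass to quasi-concavity of the ratio of a concave numerator over the positive affine denominator $\zeta\rho+P_c$. Two differences are worth noting. First, you make explicit an intermediate step the paper leaves implicit, namely that concavity of $r$ yields log-convexity of $\psi$ in \eqref{psi} (a positive constant plus an expectation of log-convex integrands) and hence concavity of $C_e$; the paper jumps from concavity of $r$ directly to quasi-concavity of the EEE by citing prior fractional-programming results. Second, and more substantively, you correctly identify the weak point: the dispersion term behaves like $\sqrt{2z\rho}$ near $\rho=0$, so $r$ is convex at low SNR and global concavity fails there. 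The paper handles this only by asserting that the negative term in $\partial^2 r/\partial\rho^2$ dominates ``for practical rate and SNR regions'' (non-extremely low SNR, $\geq -10$ dB); your observation is precisely why that restriction is needed. Your proposed repair --- reducing quasi-concavity to the sign of $C_e''$ at stationary points of $\eta_{ee}$, using $\eta_{ee}''(\rho^\star)=C_e''(\rho^\star)/P_t(\rho^\star)$ together with the fact that stationary points can only occur where $C_e'=\zeta\,\eta_{ee}>0$ --- is a genuine refinement, and the algebra you quote is correct. However, you leave the crux (that the low-SNR convex region contains no stationary point of $\eta_{ee}$) unproved, so as written your argument is no more complete than the paper's; it is, rather, a more honest accounting of what remains to be shown. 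If you close that last step (for instance by verifying that $\eta_{ee}$ is increasing throughout the region where $r$ fails to be concave, consistent with $\eta_{ee}(0)=0$), you would obtain a strictly stronger proof than the one given in the paper.
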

\begin{proof} 
Please refer to Appendix C
\end{proof}
\begin{figure}[!b] 
	\centering
	\includegraphics[width=1\columnwidth]{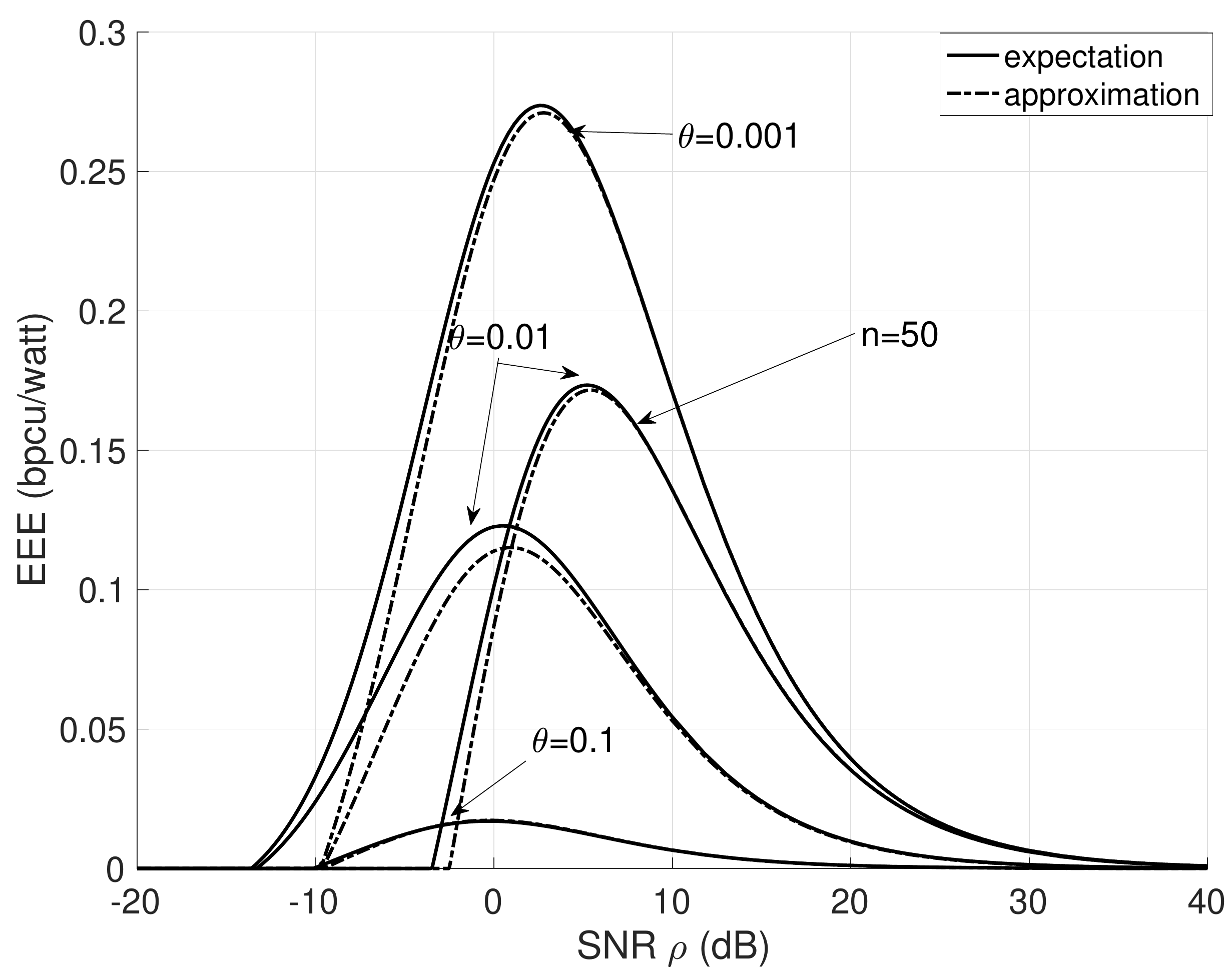}
	\centering
	\vspace{-5mm}
	\caption{EEE vs SNR in quasi-static fading for $m=1$, $n=500, \epsilon=10^{-4}$, $P_c=1.2, \zeta=1.2, \lambda=1$ and different delay exponents $\theta$.}
	\vspace{-5mm}
	\label{EEE}
\end{figure}
Fig. \ref{EEE} illustrates the EEE in Rayleigh block fading channel for different delay exponents, while applying the expectation in (\ref{EEE0}) and Theorem \ref{lemma 2}. The network parameters are $n=500$ symbol periods and $\epsilon=10^{-4}$. The figure proves the accuracy of Theorem \ref{lemma 2} as a tight approximation for the EEE, which is also well established for the EC in \cite{TVT_Shehab}. However and unlike the EC which is an asymptotically increasing function of the SNR, the figure shows the convexity of the upper contour of the EEE in the transmit power and that the approximation in Theorem \ref{lemma 2} captures this quasi-concavity precisely as stated in Theorem \ref{lemma quasi-concavity}. Note that the EEE declines when the delay constraint becomes more strict. Meanwhile, it is also observed that the optimum transmit power shifts to a higher value for less strict delay constraints. Finally, we plot the EEE also for smaller packets with length of $n=50$ symbol periods. As we can observe, the EEE is higher for smaller packet size as the delay is minimized which boosts the EC. The approximation holds tightly for this setup as well.

The quasi-concavity of the EEE in the SNR aids to characterize the optimum power allocation which maximizes the EEE in the linear power consumption model. 
\begin{theorem}\label{th-p*}
The optimum power allocation for maximizing the EEE is $\rho^*$, which is the solution of 
\begin{align}\label{p*9}
\begin{split}
\eta_{ee}(\rho^*)=-\frac{1}{n \theta}\left( \frac{\mathcal{J}^{'}(\rho^*)}{\mathcal{J}(\rho^*)}\right), 
\end{split}
\end{align}
where $\kappa_1= \frac{\beta^2}{2}+\beta+1$ and $\kappa_2=\frac{\beta^2}{2}+\beta$, and
\begin{align}\label{p*10}
\begin{split}
\mathcal{J}=e^{\frac{1}{\rho}} \rho^{\alpha}\left(\kappa_1 \Gamma\left( \alpha+1,\frac{1}{\rho}\right) -\frac{\kappa_2}{\rho^2} \Gamma\left( \alpha-1,\frac{1}{\rho}\right)  \right),
\end{split}
\end{align}
\begin{align}\label{p*11}
\begin{split}
&\mathcal{J}^{'}(\rho^*)=\frac{\!\partial \mathcal{J}}{\!\partial \rho} \\
&=-\frac{1}{\rho^2}\left[\left(1+\frac{\alpha}{\rho} \!\right)\!\mathcal{J} +\!\frac{(1-\kappa_1)e^{-\frac{1}{\rho}}}{\rho^{\alpha}}- \frac{2 \kappa_2}{\rho}  \Gamma\left( \alpha\!-\!1,\frac{1}{\rho}\right) \right]\!. 
\end{split}
\end{align}
\end{theorem}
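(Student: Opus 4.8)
The plan is to turn the optimization into a stationarity condition that the quasi-concavity result makes both necessary and sufficient. By Theorem \ref{lemma quasi-concavity}, $\eta_{ee}(\rho)$ is quasi-concave in the transmit SNR, so on the feasible interval it has a unique interior maximizer $\rho^*$ determined by $\mathrm{d}\eta_{ee}/\mathrm{d}\rho=0$. Writing the expression in \eqref{Rayleigh} as a ratio, with numerator $-\tfrac{1}{n\theta}\ln\!\left[\epsilon+(1-\epsilon)\mathcal{J}(\rho)\right]$ and denominator $\zeta\rho+P_c$, I would differentiate by the quotient rule and set the numerator of the derivative to zero. This gives an identity of the form (derivative of the log-term)$\times$(denominator) $=$ (log-term)$\times$(derivative of denominator); using logarithmic differentiation $\tfrac{\mathrm{d}}{\mathrm{d}\rho}\ln[\epsilon+(1-\epsilon)\mathcal{J}]=(1-\epsilon)\mathcal{J}'/[\epsilon+(1-\epsilon)\mathcal{J}]$ and recognizing $\eta_{ee}(\rho^*)$ on the right-hand side, the condition rearranges into the stationarity identity \eqref{p*9}, which equates the optimal EEE with the normalized logarithmic derivative $-\tfrac{1}{n\theta}\mathcal{J}'(\rho^*)/\mathcal{J}(\rho^*)$.

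The substantive work is the explicit evaluation of $\mathcal{J}'(\rho)$ in \eqref{p*11} from the closed form \eqref{p*10}. The difficulty is that $\rho$ enters $\mathcal{J}$ in three coupled ways at once: through the exponential prefactor $e^{1/\rho}$, through the algebraic factor $\rho^{\alpha}$, and, most delicately, through the lower limit $1/\rho$ inside each upper incomplete gamma function. I would differentiate the elementary prefactors with $\tfrac{\mathrm{d}}{\mathrm{d}\rho}e^{1/\rho}=-\rho^{-2}e^{1/\rho}$ and $\tfrac{\mathrm{d}}{\mathrm{d}\rho}\rho^{\alpha}=\alpha\rho^{\alpha-1}$, and differentiate the gamma terms using the Leibniz/fundamental-theorem identity $\tfrac{\partial}{\partial x}\Gamma(s,x)=-x^{s-1}e^{-x}$ composed with $\tfrac{\mathrm{d}}{\mathrm{d}\rho}(1/\rho)=-\rho^{-2}$, so that $\tfrac{\mathrm{d}}{\mathrm{d}\rho}\Gamma(s,1/\rho)=\rho^{-(s+1)}e^{-1/\rho}$. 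Specializing to $s=\alpha+1$ and $s=\alpha-1$ yields the two boundary contributions and, crucially, the factor $e^{-1/\rho}$ which cancels against the prefactor $e^{1/\rho}$.

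The main obstacle, and the step I expect to be most error-prone, is the bookkeeping that collapses the resulting collection of product-rule terms back into the compact, $\mathcal{J}$-referential form of \eqref{p*11}. After expansion, the terms still carrying $\Gamma(\alpha+1,1/\rho)$ reassemble into a multiple of $\mathcal{J}$ plus a residual $\Gamma(\alpha-1,1/\rho)$ contribution, while the $e^{1/\rho}\rho^{\alpha}\!\cdot\!e^{-1/\rho}$ pieces reduce to elementary powers of $\rho$. The simplification hinges on the algebraic identity $\kappa_1-\kappa_2=1$, which is immediate from $\kappa_1=\tfrac{\beta^2}{2}+\beta+1$ and $\kappa_2=\tfrac{\beta^2}{2}+\beta$ and is precisely what lets the boundary coefficients telescope into the $(1-\kappa_1)$ factor seen in \eqref{p*11}; where convenient, the recurrence $\Gamma(s+1,x)=s\,\Gamma(s,x)+x^{s}e^{-x}$ can be used to reconcile the differing gamma orders. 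Substituting the resulting $\mathcal{J}'(\rho^*)$ into \eqref{p*9} then yields the claimed characterization, and I would close by observing that \eqref{p*9} is transcendental in $\rho$, so $\rho^*$ is obtained numerically (e.g., by a one-dimensional search exploiting the established quasi-concavity).
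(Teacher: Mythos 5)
Your proposal follows essentially the same route as the paper: invoke the quasi-concavity established in Theorem \ref{lemma quasi-concavity} to reduce the maximization to a first-order stationarity condition, differentiate \eqref{Rayleigh} by the quotient rule, and evaluate $\mathcal{J}'$ via the derivative of the upper incomplete gamma function composed with the inner limit $1/\rho$, using $\kappa_1-\kappa_2=1$ to collapse the boundary terms. The one step you gloss over, and which the paper makes explicit, is that \eqref{p*9} is \emph{not} an exact rearrangement of the stationarity condition. Logarithmic differentiation gives $\tfrac{\mathrm{d}}{\mathrm{d}\rho}\ln[\epsilon+(1-\epsilon)\mathcal{J}]=(1-\epsilon)\mathcal{J}'/[\epsilon+(1-\epsilon)\mathcal{J}]$, so the exact condition carries the factor $(1-\epsilon)/[\epsilon+(1-\epsilon)\mathcal{J}]$ rather than $1/\mathcal{J}$; the paper passes to the form in \eqref{p*9} only after invoking the approximation $\epsilon+(1-\epsilon)\mathcal{J}\approx\mathcal{J}$, justified by $\mathcal{J}\gg 1\gg\epsilon$ in the regime of interest. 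Without stating that approximation your ``recognizing $\eta_{ee}(\rho^*)$ on the right-hand side'' does not close the argument. A related bookkeeping point: the derivative of the denominator contributes the inverse drain efficiency $\zeta$, so the exact rearrangement reads $\eta_{ee}(\rho^*)=-\tfrac{1}{\zeta n\theta}\,\mathcal{J}'(\rho^*)/\mathcal{J}(\rho^*)$ (this $\zeta$ is visible in the paper's intermediate step \eqref{p*7}); if you carry out your plan carefully you will obtain it, and you should reconcile it with the displayed form of \eqref{p*9}. Your mechanics for $\mathcal{J}'$ (chain rule giving $\tfrac{\mathrm{d}}{\mathrm{d}\rho}\Gamma(s,1/\rho)=\rho^{-(s+1)}e^{-1/\rho}$, cancellation against $e^{1/\rho}$, and the telescoping via $\kappa_1-\kappa_2=1$) are sound and match the paper's computation.
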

\begin{proof}
Based on the quasi-concavity of the EEE function which was proven in Theorem \ref{lemma quasi-concavity}, we differentiate (\ref{Rayleigh}) with respect to $\rho$ and equate to zero as follows
\begin{align}\label{p*7}
\begin{split}
\frac{\partial \eta_{ee}}{\partial \rho}= -\left[\frac{\frac{(1-\epsilon)\mathcal{J}^{'}(\zeta \rho+P_c)}{\epsilon+(1-\epsilon)\mathcal{J}}-\zeta \ln(\epsilon+(1-\epsilon)\mathcal{J})}{n \theta(\zeta\rho+P_c)^2} \right] \\ 
\approx-\left[\frac{\frac{\mathcal{J}^{'}(\zeta \rho+P_c)}{n \theta\mathcal{J}(\zeta\rho+P_c)}-\frac{\zeta \ln(\epsilon+(1-\epsilon)\mathcal{J})}{n \theta(\zeta\rho+P_c)}}{(\zeta\rho+P_c)} \right]=0,
\end{split}
\end{align}
where the above approximation is valid since ${\mathcal J}$ is much larger than 1, and the reliability constraint $\epsilon$ is very small (i.e, $ {\mathcal J}>>\epsilon$).  Then, to differentiate $\mathcal{J}$, we apply the derivative of the upper incomplete gamma function \cite{Gradshteyn}, which yields 
\begin{align}
\frac{\!\partial\! \mathcal{J}}{\!\partial\! \rho}&\!=\!-\frac{1}{\rho^2}\!\left[\!\mathcal{J}\!+\! \frac{\alpha}{\rho} \mathcal{J}\!-\! \frac{\kappa_1 e^{\!-\frac{1}{\rho}}}{\rho^\alpha} \right. 
\left.- \frac{2 \kappa_2}{\rho}  \Gamma\!\left(\! \alpha\!-\!1,\frac{1}{\rho}\right)\!+\!\frac{e^{\!-\frac{1}{\rho}}}{\rho^\alpha}  \!\right], 
\end{align}
and after algebraic manipulation we obtain \eqref{p*11}, which concludes the proof. 
\end{proof}
Despite the fact that we were able to find the partial derivative of $\mathcal{J}$, a closed form solution for (\ref{p*9}) does not exist. For this purpose, we can compute a point-wise numerical solution or utilize Matlab root-finding functions, e.g., fzero in a similar way to \cite{paper8}.

\section{Empty buffer probability model} \label{EBP}
Previously, we assumed that the buffer is always full which practically is not always the case. In real scenarios, there would be instants in which a certain IoT device becomes idle and therefore has no data to transmit. Thus, we need to account for the case when the buffer is empty. Accordingly, we apply the model considered in \cite{paper7} to networks operating in the finite blocklength regime with non-vanishing probability of error $\epsilon$. After accounting for EBP, the transmission probability $P_{nb}$ is equal to ($1-$ the probability of empty buffer) and the transmission process appears in Fig. \ref{Empty_buffer}. 
\begin{figure}[!t] 
	\centering
	\includegraphics[width=1\columnwidth]{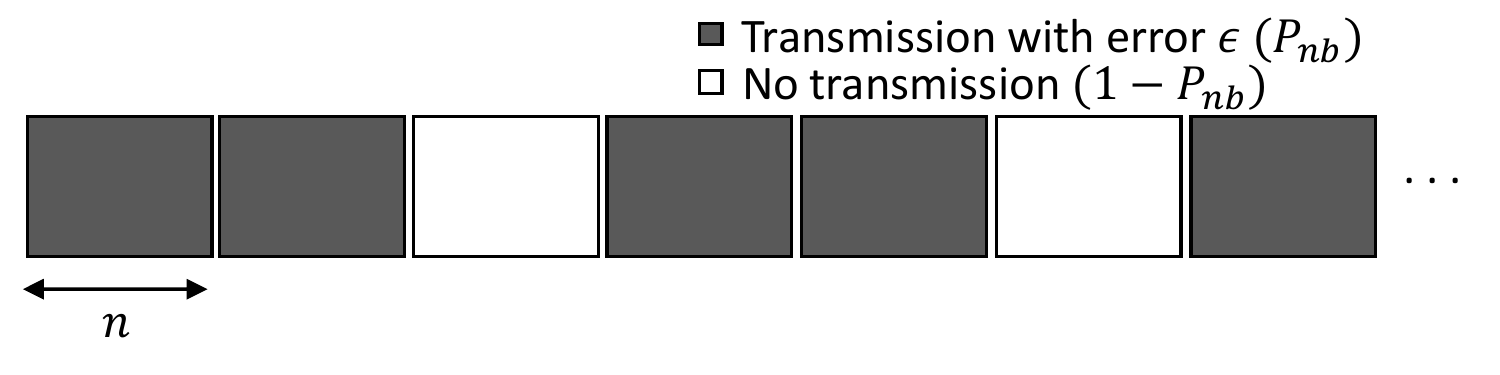}
	\vspace{-4mm}
	\caption{Transmission with empty buffer probability in quasi-static channel with blocklength $n$.}
	\label{Empty_buffer}
	\vspace{-5mm}
\end{figure}

For an average arrival rate of $\lambda$ and a stable queue, the power consumption becomes 
\begin{equation}\label{Pt_nb}
P_t(\rho)=P_{nb}\zeta p+P_c=\frac{\lambda}{\mathop{\mathbb{E}}\left[r\right] }\zeta \rho+P_c,
\end{equation}
with $P_{nb}=\frac{\lambda}{\mathop{\mathbb{E}}\left[r\right] }$ denoting non-empty buffer probability (NBP), which is bounded between 0 and 1. The EEE with EBP is 
\begin{equation}\label{EEE1}
\eta_{ee}=\frac{-\frac{1}{n\theta}\ln \left[ \epsilon+(1-\epsilon) \ \mathcal{J}\right]}{ \frac{\lambda}{\mathop{\mathbb{E}}\left[ r\right] }\zeta \rho+P_c},  
\end{equation} 
where the numerator represents the effective capacity in the finite blocklength regime as defined in Theorem \ref{lemma 2}.

Note that similar to the fluid model\footnote{The reason behind the choice a fluid model is that these fluid models are motivated as approximations to discrete queueing systems. Fluid flow queues have been well accepted as a useful mathematical tool for modeling and have long been used to evaluate the performance of telecommunication and computer systems. In particular, we apply the fluid model to characterize the asymptotic delay probability, and to approximate the NBP by arrival rate divided by average service rate. The exact (non-asymptotic) analysis of delay outages and empty buffers is outside the scope of this work.}, the NBP indicates the average asymptotic probability of transmission over relatively long time, where $P_{nb}=1$ means that the average serviced amount of data per packet time is equal to the average amount of data arrival per packet time. This indicates that, in average, transmission always occurs. 

\subsection{Verifying the effective energy efficiency model with empty buffer probability in finite blocklength}

According to \cite{EEEbook}, an energy efficiency function must be non-negative, must be zero when the transmit power is zero, and must tend to zero as the transmit power tends to infinity. It was shown in \cite{paper7} that this power the EBP model fulfills is valid for Shannon model. In the following Lemma, we verify that this EBP power consumption model is valid as well for short packets transmission. 
\begin{lemma} \label{EEEV}
	The EEE in (\ref{EEE1}) is zero for $\rho=0$ and tends to 0 when $\rho\rightarrow \infty$.
\end{lemma}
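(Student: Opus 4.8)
The plan is to verify the two limiting conditions directly from the closed-form expression \eqref{EEE1}, treating the numerator (effective capacity) and the denominator (power consumption) separately and then combining. First I would analyze the behaviour as $\rho \to 0$. In the numerator, I expect $\mathcal{J} \to \epsilon$-neutral behaviour: as $\rho\to 0$ the achievable rate $r \to 0$, so $e^{-n\theta r}\to 1$, and hence $\epsilon + (1-\epsilon)\mathcal{J} \to 1$, forcing $\ln[\epsilon+(1-\epsilon)\mathcal{J}] \to 0$. So the numerator vanishes. For the denominator I would use the NBP form $P_t = \frac{\lambda}{\mathbb{E}[r]}\zeta\rho + P_c$; since $\mathbb{E}[r]\to 0$ as $\rho\to 0$, the ratio $\rho/\mathbb{E}[r]$ is the delicate quantity, and I would argue via the linearization $r \approx \rho|h|^2\log_2 e$ for small $\rho$ (so that $\mathbb{E}[r]$ is asymptotically proportional to $\rho$) that $\rho/\mathbb{E}[r]$ tends to a finite nonzero constant, leaving the denominator finite and positive (bounded below by $P_c>0$). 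Hence $\eta_{ee}\to 0/P_c = 0$.

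Next I would treat $\rho \to \infty$. Here the denominator clearly grows without bound, since $\frac{\lambda}{\mathbb{E}[r]}\zeta\rho + P_c \to \infty$ (the factor $\rho$ diverges while $\mathbb{E}[r]$ grows only logarithmically, so the whole denominator still diverges). For the numerator, the effective capacity $-\frac{1}{n\theta}\ln[\epsilon+(1-\epsilon)\mathcal{J}]$ grows at most logarithmically in $\rho$, mirroring the $\log_2(1+\rho|h|^2)$ growth of $r$; I would make this precise by bounding $\mathbb{E}[r]$ and hence the numerator by $O(\log\rho)$. Since a logarithm divided by a linear term tends to zero, the ratio $\eta_{ee}\to 0$.

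The main obstacle I anticipate is the $\rho\to 0$ limit of the denominator, specifically showing that $\rho/\mathbb{E}[r]$ converges to a finite constant rather than blowing up or vanishing. This requires a careful small-$\rho$ expansion of $r(\rho)$ in \eqref{eq3}, and one must check that the finite-blocklength penalty term $\frac{Q^{-1}(\epsilon)\log_2 e}{\sqrt n}\sqrt{1-(1+\rho z)^{-2}}$ does not dominate the rate at small $\rho$; since that penalty term is $O(\sqrt{\rho})$ while the capacity term is $O(\rho)$, the penalty actually vanishes faster relative to nothing and one must confirm $\mathbb{E}[r]$ stays positive and $\sim c\rho$ after taking expectation over the gamma-distributed $Z$. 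A clean way to sidestep this is to note that $\mathbb{E}[r]$ and the numerator both vanish linearly versus the denominator staying bounded below by $P_c$, so the ratio is trivially zero at $\rho=0$ by direct substitution once $\mathcal{J}\to 1$ is established; the asymptotic argument is then only needed for rigor near, not exactly at, $\rho=0$.

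Assembling these two limits, together with the manifest non-negativity of $\eta_{ee}$ (the effective capacity is non-negative and the power consumption is strictly positive), completes the verification that \eqref{EEE1} satisfies the defining properties of a valid energy-efficiency function, which is precisely the claim of Lemma~\ref{EEEV}.
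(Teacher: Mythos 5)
Your overall strategy --- treating the numerator and denominator of \eqref{EEE1} separately in each limit, observing that the numerator vanishes at $\rho=0$ while the denominator stays bounded below by $P_c>0$, and that the numerator grows at most logarithmically while the denominator grows essentially linearly as $\rho\to\infty$ --- is the same as the paper's (Appendix D), and your conclusion is correct. However, one intermediate assertion is wrong, and the justification you offer for it is inverted. You claim that $\rho/\mathbb{E}[r]$ tends to a finite nonzero constant as $\rho\to 0$ on the grounds that the dispersion penalty is $O(\sqrt{\rho})$ while the capacity term is $O(\rho)$, so the penalty ``vanishes faster.'' As $\rho\to 0$ it is $\sqrt{\rho}$ that dominates $\rho$, not the reverse: since $\sqrt{1-(1+\rho z)^{-2}}\sim\sqrt{2\rho z}$, the penalty term is the \emph{leading} term of $r$ at low SNR, $\mathbb{E}[r]$ is of order $\sqrt{\rho}$ (and in fact negative under the normal approximation), and therefore $\rho/\mathbb{E}[r]$ is of order $\sqrt{\rho}$ and tends to $0$, not to a nonzero constant. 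The paper obtains exactly this via L'Hopital: $\lim_{\rho\to 0}\rho/\mathbb{E}[r]=\lim 1/\mathbb{E}[\partial r/\partial\rho]=0$, because $\partial r/\partial\rho$ contains the factor $1/\gamma$, which blows up like $\rho^{-1/2}$.

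Fortunately the lemma's conclusion is insensitive to which limit $\rho/\mathbb{E}[r]$ actually takes: in either case the denominator of \eqref{EEE1} tends to a finite positive value (equal to $P_c$ in the correct computation), and since the numerator vanishes the ratio is zero --- precisely the fallback you note at the end of your third paragraph, so the proof survives once the asymptotics are corrected. For $\rho\to\infty$ your argument is fine; note only that the paper uses the even simpler observation that the argument of the logarithm is bounded below by $\epsilon$, so the numerator is bounded by the constant $-\ln\epsilon/(n\theta)$, which avoids having to quantify any logarithmic growth at all.
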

\begin{proof} Please refer to Appendix D.
\end{proof}

\subsection{Effective energy efficiency maximization with buffer constraints}
We investigate the EEE maximization with EC, delay, and power constraints. EC should be larger than the arrival rate $\lambda$ to guarantee a stable queue, while the transmission SNR $\rho$ is bounded by $\rho_{max}$. Thus, the optimization problem is formulated as  
\begin{equation}\label{op2}
\begin{split}
\max_{\rho \geq 0, \theta \geq 0} \ &\eta_{ee}=\frac{-\frac{1}{n\theta}\ln \left[ \epsilon+(1-\epsilon) \ \mathcal{J}\right]}{P_{nb}\zeta \rho+P_c},  \\ 
s.t \ \ &C_e(\rho,\theta,\epsilon)\geq\lambda, \ \ P_{nb} e^{-\theta \lambda \delta}\leq \Lambda \\
&\rho \leq \rho_{max}, \ \ \epsilon \leq \epsilon_t, \ \ 0 \leq P_{nb} \leq 1.
\end{split}
\end{equation} 
Note that the reliability constraint here, which is the error probability constraint on the first transmission, is important to improve QoS in URLLC. Meanwhile, the delay outage probability constraint does not necessarily guarantee reliability if the $\epsilon$ is not imposed.

For the full buffer model, we set $P_{nb}$ to 1. We perform a line search for $\rho$ in the interval $\left[0,\rho_{max} \right]$. The optimum error probability is $\min\left[\epsilon^*,\epsilon_t \right]$ where $\epsilon^*$ is obtained from Remark \ref{lemma 3}. When analyzing the empty buffer scenario, we set $P_{nb}=\frac{\lambda}{\mathop{\mathbb{E}}\left[ r\right] }$. Here, $\Lambda$ is the maximum allowed delay outage probability. In all cases, the optimal value of $\theta$ can be obtained from the second constraint at equality as
\begin{equation}\label{th_pb}
\theta^{*}(\rho)=\frac{1}{\lambda \delta} \ln \frac{P_{nb}}{\Lambda }.
\end{equation}
\section{Adaptive Retransmission scenario} \label{retransmission}
As shown in \cite{Dosti}, achieving ultra-reliability using one transmission consumes a huge amount of power, which is not applicable for energy limited IoT devices. Herein, we present a basic framework for applying ARQ by considering adaptive rate retransmission of faulty packets when the buffer is empty in the EBP model in order to achieve ultra-reliability. In this framework, at a certain time instant $t$, we assume a buffer-aware transmission as in \cite{buffer1,buffer2}; this allows the transmitter to have a prior knowledge of whether the buffer will be empty or there is a packet that needs to be delivered at the next time slot $t+1$. Then the following is applied:
\begin{enumerate}
	\item If a packet arrives at time slot $t$ and there is also a packet arrival at $t+1$, normal transmission occurs at $t$ with rate $r(\epsilon)$.
	
	\item If a packet arrives at time slot $t$ and there is no packet arrival at $t+1$, we transmit the with rate $r(\epsilon_1>\epsilon)$ and apply ARQ at $t+1$.  
	
\end{enumerate}

However, the non-EBP in this case will be slightly changed to $P_{nb}^{'}$ due to the rate variations. The first case occurs when the buffer is full at $t+1$. Assuming i.i.d arrivals, the probability of occurrence of case number 1 is the same as the non-empty buffer probability $P_{nb}^{'}$.

In case 2, we apply the type-I ARQ protocol. In type-I ARQ protocol, the node is allowed to retransmit its packet if it receives a NACK feedback from the receiver, which indicates that the packet is not successfully decoded. We assume a maximum of only 1 retransmission in order to satisfy the stringent delay requirements in URLLC. Note that, the transmitter performs the first transmission with an error probability of $\epsilon_1$ and the second transmission with an error probability of $\epsilon_2$ such that the aggregate error probability satisfies the reliability constraint (i.e, $\epsilon_1 \epsilon_2 \leq \epsilon$). Thus, both $\epsilon_1$ and $\epsilon_2$ are higher than $\epsilon$. The fact that the transmission rate is an increasing function of the error probability implies that both $r(\epsilon_1)$ and $r(\epsilon_2)$ are higher than $r(\epsilon)$, which reflects the rate gains of this model. In other words, we obtain a significant rate gain in the likely event of successful first transmission. Moreover, the aggregate packet error probability when applying ARQ would be $\epsilon$ which maintains the reliability level. This can be performed by rate adaption as in \cite{r_adaption}, where sensors generate data in the form of bits that can be relocated from one packet to the other. This rate adjustment occurs when the buffer is empty in the next time instant $t+1$ with probability $1-P_{nb}^{'}$ and results in a rise in the instantaneous rate with symbols resizing. Generally let $r_0=\mathbb{E}\left[r(\epsilon)\right]$, $r_1=\mathbb{E}\left[r(\epsilon_1)\right]$ and $r_2=\mathbb{E}\left[r(\epsilon_2)\right]$. Then the average rate becomes
\begin{align} \label{e44}
\mathbb{E}\left[r\right]&=p_{nb}^{'}r_0+(1-p_{nb}^{'})\left[(1-\epsilon_1)r_1+\frac{\epsilon_1r_2}{2} \right] \notag \\
&=p_{nb}^{'}\left(r_0-\kappa \right)+\kappa, 
\end{align}
where $\kappa=(1-\epsilon_1)r_1+\frac{\epsilon_1r_2}{2}$. Note that the rate is divided by 2 in case of retransmission because the time duration of transmitting $n$ symbols is approximately the double as expressed in the last term of (\ref{e44}). Due to the rate variation, the modified non-EBP should satisfy
\begin{align} \label{e66}
p_{nb}^{'}=\frac{\lambda}{\mathbb{E}\left[ r\right] }=\frac{\lambda}{p_{nb}^{'}\left(r_0-\kappa \right)+\kappa}.
\end{align}

Solving (\ref{e66}) for $p_{nb}^{'}$, we obtain
\begin{align} \label{e78}
p_{nb}^{'}=\frac{-\kappa+\sqrt{\kappa^2+4(r_o-\kappa)\lambda}}{2(r_o-\kappa)} \leq 1
,
\end{align}
and the EC in this case is given by
\begin{align}\label{e144}
&C_{e2}=\frac{-1}{n\theta}\ln\!\left( \mathop{\mathbb{E}_{Z}}\left[p_{nb}^{'}(\epsilon+(1-\epsilon)e^{-n\theta r_0})+ \right.\right. \notag \\
&\left.\left.(1-p_{nb}^{'})\left( (1-\epsilon_1)e^{-n\theta r_1}+\epsilon_1(1-\epsilon_2) e^{-n\theta \frac{r_2}{2}}+\epsilon\right) \! \right] \right). 
\end{align}

To clarify \eqref{e144}, we mention that the transmission occurs with rate $r_0$ when the buffer is not empty in the next time slot. This is indicated by the first term in equation (25) and occurs with probability $p_{nb}^{'}$. When the buffer is empty in the next slot with probability $(1-p_{nb}^{'})$, one transmission occurs rate $r_1$ if there is no error where the no error probability is $1-\epsilon_1$. However, the rate is divided by 2 to become $r_2/2$ in case of retransmission when an error occurs in the first transmission only with probability $\epsilon_1 (1-\epsilon_2)$. While the rate is considered to be zero when both transmissions fail with probability $\epsilon_1 \epsilon_2=\epsilon$. In order to map the impact of the second transmission on the overall effective capacity, we calculate the average of the rate of the first transmission which is zero in case of error and the second transmission which is $r(\epsilon_2)$. Since the transmission occurs in the duration of 2 time slots, the rate is virtually divided by 2 as indicated in the second term of \eqref{e144}.

Since the channel coefficients change from one transmission to the other, we need to vary the transmit power in order to compensate for the channel coefficient variation so that the product $\rho |h|^2$ is the same for both transmissions. Although the transmit power varies, the average transmit power $\mathbb{E}[\rho]$ is the same for both transmissions and independent from $p_{nb}^{'}$ or any other parameter. Thus, the consumed power for this scenario is a probabilistic function of the transmit power of one or two transmissions which after manipulations is allowed to be expressed as \vspace{-2mm}
\begin{align} \label{e16}
P_t=\left[ p_{nb}^{'2}+p_{nb}^{'}(1-p_{nb}^{'})  \left( 1+\epsilon_1\right)\right]\zeta \rho + P_c ,
\end{align}
Defined by the quotient of EC to the transmit power, the EEE of this scenario is formulated as
\begin{align}\label{e18}
\eta_{ee_2}=\frac{C_{e2}}{\left[  p_{nb}^{'2}+p_{nb}^{'}(1-p_{nb'}^{'})  \left( 1+\epsilon_1\right)\right]\zeta \rho + P_c}. 
\end{align}

Note that the optimum error probability of each transmission in the EBP model with two ARQ transmissions is not simply the square root of the aggregate target error probability $\epsilon=\epsilon_t$. Therefore, we define the optimization problem to determine the optimum error probability of the first transmission that maximizes the EEE subject to a target reliability constraint as
\begin{align}\label{e82}
\max \ \ &\eta_{ee_2}(\epsilon_1,\epsilon_2) \\
s.t \ \ &0<\epsilon_1 \epsilon_2 \leq \epsilon_t \notag
\end{align}
An interesting analysis is to determine the asymptotic behaviour of the EEE as the delay constraint $\theta \rightarrow$ $\infty$ or 0. This corresponds to extremely strict or no delay constraint, respectively. It is straight forward to conclude that the EEE tends to zero for extremely stringent delay constraint (i.e, when $\theta \rightarrow$ $\infty$). Herein, we derive the upper bound of the EEE for relaxed latency constraint as $\theta \rightarrow 0$. 

\begin{theorem} \label{ub}
The EEE of the proposed retransmission scenario is upper bounded by 
\begin{align} \label{ub eq}
\!\!\!\lim\limits_{\theta\rightarrow 0} &\eta_{ee_2\!\!}\!=\!\! 
\frac{p_{nb}^{'}(1\!-\!\epsilon)r_o\!+\!(1\!-\!p_{nb}^{'})\!\left[\!(1\!-\!\epsilon_1) r_1\!+\!\epsilon_1\!(1\!-\!\epsilon_2\!)\!\frac{r_2}{2} \!\right]}{\!\left[\!  p_{nb}^{'2}\!+\!p_{nb}^{'}(1\!-\!p_{nb'}^{'})  \!\left(\! 1\!+\!\epsilon_1\!\right)\!\right] \zeta \rho \!+\! P_c}\!,
\end{align}	
and lower bounded by zero.
\end{theorem}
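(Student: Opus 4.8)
The plan is to evaluate the limit directly, exploiting the fact that the denominator of $\eta_{ee_2}$ in \eqref{e18} carries no dependence on $\theta$: the quantities $p_{nb}'$ (from \eqref{e78}), $\rho$, and $\epsilon_1$ are all fixed by the rate/power model and the reliability constraint, none of which involve $\theta$. Consequently the entire $\theta$-dependence resides in the numerator $C_{e2}$ of \eqref{e144}, and it suffices to compute $\lim_{\theta\to 0} C_{e2}$ and then divide by the fixed denominator.

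Write $C_{e2} = -\frac{1}{n\theta}\ln\psi_2(\theta)$, where $\psi_2(\theta)$ denotes the expectation inside the logarithm of \eqref{e144}. First I would verify that $\psi_2(0)=1$. Sending $\theta\to 0$ drives every exponential $e^{-n\theta r_0}$, $e^{-n\theta r_1}$, $e^{-n\theta r_2/2}$ to $1$, so the bracket collapses to $p_{nb}' + (1-p_{nb}')\big[(1-\epsilon_1)+\epsilon_1(1-\epsilon_2)+\epsilon\big]$; expanding the second bracket gives $1-\epsilon_1\epsilon_2+\epsilon$, and the aggregate-reliability constraint $\epsilon_1\epsilon_2=\epsilon$ reduces it to $1$, whence $\psi_2(0)=1$ and $\ln\psi_2(0)=0$. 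Thus $C_{e2}$ is of indeterminate form $0/0$ as $\theta\to 0$, and L'Hôpital's rule in $\theta$ yields
\begin{equation*}
\lim_{\theta\to 0} C_{e2} = \lim_{\theta\to 0}\frac{-\psi_2'(\theta)/\psi_2(\theta)}{n} = -\frac{\psi_2'(0)}{n},
\end{equation*}
using $\psi_2(0)=1$ in the last step.

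The next step is to differentiate $\psi_2$ term by term: each factor $e^{-n\theta r}$ contributes $-n r\, e^{-n\theta r}$, and evaluating at $\theta=0$ removes the exponentials, leaving $\psi_2'(0) = -n\,\mathbb{E}_Z\big[p_{nb}'(1-\epsilon)r_0 + (1-p_{nb}')\big((1-\epsilon_1)r_1 + \epsilon_1(1-\epsilon_2)\tfrac{r_2}{2}\big)\big]$. Carrying the expectation through (or simply noting that $r_0,r_1,r_2$ are already the averaged rates defined in \eqref{e44}), the factor $n$ cancels and produces exactly the numerator asserted in \eqref{ub eq}. Dividing by the $\theta$-independent denominator of \eqref{e18} gives the claimed closed form for $\lim_{\theta\to 0}\eta_{ee_2}$.

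Finally, the lower bound is the standard non-negativity of effective energy efficiency: $\psi_2(\theta)\in(0,1]$ for every $\theta\ge 0$ (a convex combination of probabilities and decaying exponentials bounded by $1$, as just seen at $\theta=0$), so $\ln\psi_2\le 0$, hence $C_{e2}\ge 0$, and since the denominator is strictly positive we get $\eta_{ee_2}\ge 0$, with equality approached as $\theta\to\infty$ where $C_{e2}\to 0$. That the $\theta\to 0$ value is a genuine \emph{upper} bound follows because $C_{e2}$ is non-increasing in $\theta$ — stricter delay exponents can only shrink the effective capacity, as observed around Fig.~\ref{EEE} — while the denominator stays fixed, so $\eta_{ee_2}$ is maximized in the relaxed-latency limit. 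The main subtlety I expect is the bookkeeping in the L'Hôpital step: one must justify that differentiation commutes with $\mathbb{E}_Z$ (by dominated convergence, since the integrand and its $\theta$-derivative are bounded for $\theta$ near $0$) and keep track of the fact that the clean collapse $\psi_2(0)=1$ genuinely hinges on $\epsilon_1\epsilon_2=\epsilon$; without that constraint no indeterminate form arises and the limit would take a different value.
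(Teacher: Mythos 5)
Your proposal is correct and follows essentially the same route as the paper's Appendix~E: factor out the $\theta$-independent denominator, recognize the $0/0$ form of $C_{e2}$ as $\theta\to 0$, and apply L'H\^opital's rule after exchanging differentiation with $\mathbb{E}_Z$, with the $\theta\to\infty$ limit giving the zero lower bound. If anything you are slightly more careful than the paper, which only carries out the computation for the single-transmission EC and asserts the ARQ case "follows the same procedure," whereas you explicitly verify that $\psi_2(0)=1$ hinges on $\epsilon_1\epsilon_2=\epsilon$ and note the monotonicity in $\theta$ needed to call the limit an upper bound.
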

\begin{proof}
	Please refer to Appendix E.
\end{proof}
\begin{remark}
As the system approaches ultra-reliability (i.e, $\epsilon\rightarrow 0$ ), the EC of one transmission converges to $r_0$, while the EC of the proposed EBP-ARQ scehme converges to $p_{nb}^{'}r_o+(1-p_{nb}^{'})r_1$. Hence, the EC is raised to $r_1$ for $(1-p_{nb}^{'})$ portion of the time which indicates the gain in the EC of the proposed EBP retransmission scheme. 
\end{remark}

\subsection{Power saving}
Returning to \eqref{e16}, which represents the average power consumption, we study the effect of varying the non-EBP $p_{nb}^{'}$ on the power consumption by obtaining the first derivative of \eqref{e16} as 
\begin{align} \label{e10}
\frac{\partial P_t}{\partial p_{nb}^{'}}=\left[ -2 \epsilon_1 p_{nb}^{'}+(1+\epsilon_1)\right] \zeta \rho ,
\end{align}
which is strictly non-negative for all possible values of $p_{nb}^{'}$ and $\epsilon_1$ (i.e, $0\leq p_{nb}^{'}, \epsilon_1 \leq 1$). Thus, the power consumption $p_t$ is still an increasing function of the non-empty buffer probability $p_{nb}^{'}$. Hence, minimizing the non-empty buffer probability also reduces the transmit power which leads to a longer battery life for remote sensors that are located far from energy sources.

 \begin{theorem} \label{lemma pc}
The non-EBP $p_{nb}^{'}$ in \eqref{e78} is a pseudo-convex function in $\epsilon_1$ and therefore, the minimization of $p_{nb}^{'}$ is a fractional program.
\end{theorem}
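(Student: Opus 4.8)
The plan is to reduce the statement to a single-ratio structure and then exploit the curvature of the finite-blocklength rate in the error argument. First I would rationalize the explicit root in \eqref{e78}: multiplying numerator and denominator by $\kappa+\sqrt{\kappa^2+4(r_o-\kappa)\lambda}$ collapses $p_{nb}^{'}$ to the equivalent form
\[
p_{nb}^{'}=\frac{2\lambda}{\kappa+\sqrt{\kappa^2+4(r_o-\kappa)\lambda}}=\frac{\lambda}{\mathbb{E}\left[r\right]},
\]
which is consistent with the fixed-point relation \eqref{e66} and exhibits $p_{nb}^{'}$ as a ratio of the positive constant $\lambda$ to the resolved average service rate $\mathbb{E}\left[r\right]=\tfrac{1}{2}\bigl(\kappa+\sqrt{\kappa^2+4(r_o-\kappa)\lambda}\bigr)$. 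This already casts the minimization of $p_{nb}^{'}$ as a single-ratio fractional program in the form required by Dinkelbach's method; what remains is the regularity, namely pseudo-convexity, that guarantees the parametric transform returns the global optimum.

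For the pseudo-convexity itself I would avoid trying to prove $\mathbb{E}\left[r\right]$ concave, since the nested square root is actually convex in $\kappa$ and that route is delicate. Instead I would use a composition argument together with the characterization that a differentiable, quasi-convex function whose only stationary point is a global minimizer is pseudo-convex. Writing $p_{nb}^{'}=h(\kappa)$ with $h(\kappa)=2\lambda/\bigl(\kappa+\sqrt{\kappa^2+4(r_o-\kappa)\lambda}\bigr)$, I would first show the denominator is strictly increasing in $\kappa$: its derivative is $1+(\kappa-2\lambda)/\sqrt{\kappa^2+4(r_o-\kappa)\lambda}$, which is positive because the stable-queue condition $r_o>\lambda$ keeps the radicand larger than $(\kappa-2\lambda)^2$. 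Hence $h$ is strictly decreasing in $\kappa$, so the sign of $\partial p_{nb}^{'}/\partial\epsilon_1$ is opposite to that of $\kappa^{'}(\epsilon_1)$, and the shape of $p_{nb}^{'}$ in $\epsilon_1$ is dictated entirely by the shape of $\kappa(\epsilon_1)$.

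The crux is then to show $\kappa(\epsilon_1)=(1-\epsilon_1)r_1+\tfrac{1}{2}\epsilon_1 r_2$ is unimodal in $\epsilon_1$. Here I would invoke the curvature of the rate in \eqref{eq3}: because $Q^{-1}(\cdot)$ is convex and decreasing on $(0,\tfrac12)$, the map $\epsilon\mapsto r(\epsilon)$ is concave and increasing, and taking the expectation over $Z$ preserves both, so $r_1(\epsilon_1)=\mathbb{E}\left[r(\epsilon_1)\right]$ is concave and increasing. The product term then satisfies $\frac{d^2}{d\epsilon_1^2}\bigl[(1-\epsilon_1)r_1\bigr]=(1-\epsilon_1)r_1^{''}-2r_1^{'}<0$ for $\epsilon_1<1$, so it is concave; adding the affine contribution leaves $\kappa$ concave, hence quasi-concave. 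Composing the monotone-decreasing $h$ with a single-peaked $\kappa$ makes $p_{nb}^{'}$ decrease and then increase with a unique interior minimizer, which is exactly pseudo-convexity; together with the ratio form this proves the claim and justifies solving $\min p_{nb}^{'}$ by Dinkelbach's algorithm.

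The main obstacle I anticipate is precisely this last step when the retransmission error $\epsilon_2$ is tied to $\epsilon_1$ through the reliability budget $\epsilon_1\epsilon_2\le\epsilon_t$, since then $r_2=\mathbb{E}\left[r(\epsilon_t/\epsilon_1)\right]$ is itself a non-trivial decreasing function of $\epsilon_1$ and the clean ``concave plus affine'' decomposition of $\kappa$ must be re-derived. If holding $\epsilon_2$ fixed does not suffice, I would retreat to the weaker requirement that $\kappa$ be merely quasi-concave on the feasible interval, or verify that $\kappa^{'}$ changes sign at most once, which is all the composition argument actually uses.
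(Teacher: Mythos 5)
Your rationalization of \eqref{e78} to $p_{nb}^{'}=2\lambda/\bigl(\kappa+\sqrt{\kappa^2+4(r_o-\kappa)\lambda}\bigr)$ is correct and consistent with \eqref{e66}, and the monotone-composition argument (a strictly decreasing $h(\kappa)$ composed with a concave $\kappa(\epsilon_1)$ gives a quasi-convex function whose only stationary point is the global minimizer, hence a pseudo-convex one) is sound as far as it goes; it is in fact a cleaner route than the paper's, which keeps the un-rationalized ratio in \eqref{e78}, shows its numerator is convex and nonincreasing in $\kappa$ and its denominator $2(r_o-\kappa)$ is convex in $\epsilon_1$, and then invokes the composition rules for fractional programs. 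However, the decisive step in both routes is the concavity of $\kappa(\epsilon_1)=(1-\epsilon_1)r_1+\tfrac{1}{2}\epsilon_1 r_2$ in $\epsilon_1$, and that is exactly where your argument stops short. The paper notes just before the theorem that the reliability constraint is optimally met with equality, so $\epsilon_2=\epsilon/\epsilon_1$ and $r_2=\mathbb{E}\left[r(\epsilon/\epsilon_1)\right]$ is a nonconstant, decreasing function of $\epsilon_1$; your ``concave plus affine'' decomposition, which treats $\tfrac{1}{2}\epsilon_1 r_2$ as affine, therefore does not apply to the function actually being minimized. You flag this yourself as an anticipated obstacle, but the fallbacks you offer (mere quasi-concavity of $\kappa$, or a single sign change of $\kappa^{'}$) are both unproven and, strictly speaking, not quite sufficient for pseudo-convexity without an additional check that every stationary point of the composition is a global minimizer.

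The paper's Appendix F closes precisely this gap by direct computation: using $\partial Q^{-1}(\epsilon_1)/\partial\epsilon_1=-\sqrt{2\pi}\,e^{(Q^{-1}(\epsilon_1))^2/2}$ and the chain rule through $\epsilon_2=\epsilon/\epsilon_1$, the cross terms generated by $\tfrac{1}{2}\epsilon_1 r_2(\epsilon/\epsilon_1)$ cancel, leaving $\partial^2\kappa/\partial\epsilon_1^2=-2\,\partial r_1/\partial\epsilon_1+(\partial^2 r_1/\partial\epsilon_1^2)\bigl(1-\epsilon_1-\epsilon/(2\epsilon_1)\bigr)$ as in \eqref{e62}, which is strictly negative because $r_1$ is increasing and concave in $\epsilon_1$ on $(0,\tfrac{1}{2})$ and $\epsilon_1\gg\epsilon$ makes $1-\epsilon_1-\epsilon/(2\epsilon_1)>0$. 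If you carry out this computation (or an equivalent one showing $\kappa$ remains concave under the binding constraint), your composition argument becomes a complete and arguably tidier proof; without it, the proposal has a genuine hole at its central claim.
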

\begin{proof}
Please refer to Appendix F.
\end{proof}

Furthermore, being a psuedo convex fractional program and due to the analytical intractability, it is easier to find the global minimum of $p_{nb}^{'}$ using well known optimization algorithms such as Dinkelbach's algorithm \cite{EEEbook} to minimize $p_{nb}^{'}$ and hence, the total transmit power. Later the results section shows that this optimal solution for minimizing the transmit power also highly approaches optimality for maximizing the EEE in this case. Moreover, it is more efficient and numerically tractable\footnote{Taking into account the energy consumption and computational complexity for the original online optimization algorithm, it might even turn out that the suboptimal one proposed here leads to an overall better effective energy efficiency and lower latency.} to minimize the transmit power instead of the EEE function given in \eqref{e82}. Hence, the problem becomes \vspace{-0mm}
\begin{align}\label{e84}
\min \ \ &p_{nb}^{'}(\epsilon_1,\epsilon_2) \\
s.t \ \ &0<\epsilon_1 \epsilon_2 \leq \epsilon_t. \notag
\end{align}

Note that the target is to minimize the transmit power. Thus, it is straight forward to conclude that the reliability constraint is optimally achieved at equality since more power is needed to achieve lower error and higher reliability.
As proven in Theorem \ref{lemma pc}, the problem in \eqref{e84} is a pseudo-convex fractional program. Therefore, the global optimum exists, and can be found by utilizing the Dinkelbach's algorithm as introduced in Section 3.2 in \cite{EEEbook}. It is a parametric algorithm of which the basic idea is to tackle a pseudo-convex problem by solving a sequence of easier problems which are guaranteed to converge to the global optimum. The minimization procedure is depicted in Algorithm 1.
\begin{algorithm}
	\DontPrintSemicolon
	\SetAlgoLined
	\SetKwInOut{Input}{Input}\SetKwInOut{Output}{Output}
	\Input{$F_{\sigma_0} > \delta > 0$; $n = 0$; $\sigma = 0$;}
	\Output{$\epsilon_1^*$}
	\BlankLine
	
	\While{$F_{\sigma_n} > \delta$}{
		$\epsilon_1^* = \arg \min \{-\kappa(\epsilon_1)+\sqrt{\kappa(\epsilon_1)^2+4(r_o-\kappa(\epsilon_1))\lambda} - \sigma_n2(r_o-\kappa(\epsilon_1))\}$;
		
		$F_{\sigma_n} = -\kappa(\epsilon_1^*)+\sqrt{\kappa(\epsilon_1^*)^2+4(r_o-\kappa(\epsilon_1^*))\lambda} - \sigma_n2(r_o-\kappa(\epsilon_1^*))$;
		
		$\sigma_{n + 1} = \frac{-\kappa(\epsilon_1^*)+\sqrt{\kappa(\epsilon_1^*)^2+4(r_o-\kappa(\epsilon_1^*))\lambda}}{2(r_o-\kappa(\epsilon_1^*))}$;
		
		$n = n + 1$;
	}
	
	\caption{Minimization of $p_{nb}^{'}$}
\end{algorithm}

The intuition behind the algorithm is as follows. It starts from some arbitrary estimate of $\epsilon_1^*$ and analyzes the level sets of the original problem, which are evidently convex. Then, as the algorithm progresses, it iteratively corrects the estimate of $\epsilon_1^*$ and checks if the stopping criterion is satisfied, i.e. a $\delta$-suboptimal solution has been obtained. If the tolerance margin has not yet been satisfied, then the algorithm continues scanning through the level sets of the function until convergence. Notice that the worst-case computational complexity of the algorithm is dominated by step 2, which can be solved using interior point methods. As a consequence, the convergence rate in the sub-problem sequence is super-linear \cite{Boyd,Nesterov}.

\subsection{Average latency}

Herein, we analyze the average extra packet delay induced due to retransmissions when applying the proposed EBP-ARQ with empty buffer instants. Let $\delta_1$ be the delay per packet in the single transmission scenario and $\delta_2$ be the total delay when two transmissions occur. Then the expected delay when applying ARQ with two retransmissions would be
\begin{align} \label{e30.1}
\tau=\delta_1\left( P^{'}_{nb}+(1-P^{'}_{nb})(1-\epsilon_1)\right) +\delta_2(1-P^{'}_{nb})\epsilon_1.
\end{align} 
In case of error in the first transmission, the 1 bit NACK feedback could be transmitted in a span of $\approx 6$ symbols according the Physical Uplink Control Channel (PUCCH) Format 1 in 5G NR \cite{5G_NR}. Assuming the same transmission rate for the NACK packet, the extra delay due to the NACK packet would be $\Delta=\frac{6}{n}\delta_1$. This occurs during before the second transmission and is very small compared to the one packet transmission time. Thus, we can state that $\delta_2=2\delta_1+\Delta=\left(2+\frac{6}{n}\right)\delta_1$. Hence, the normalized delay with respect to one transmission time $\delta_1$ can be written as
\begin{align} \label{e30.4}
\tau_n= P^{'}_{nb}+(1-P^{'}_{nb})(1-\epsilon_1) +\left(2+\frac{6}{n}\right)(1-P^{'}_{nb})\epsilon_1,
\end{align} 
where $\tau_n\geq 1$. Herein, \eqref{e30.4} provides an indication of the QoS when applying ARQ wiht EBP for boosting the EEE. Note that we mainitain the same QoS constraints $\theta$ snd $\epsilon$ throughout the whole analysis.

\section{Results and discussion} \label{results}
In this section, we present numerical results to illustrate the behaviour of the EEE function and the trade off between the EEE, power allocation and latency for Shannon's model and finite blocklength in different transmission scenarios. We compare our results to the infinite blocklength case to show the performance gap that results from applying the short packet information theoretic approach which is more suitable for delay constrained analysis and compare this gap to the long packets ideal case. Firstly, Fig. \ref{f1} illustrates the EEE of short packet transmission as a function of the delay exponent $\theta$ in quasi-static Rayleigh fading for $n=500$, $\rho=3$  dB, error probability $\epsilon=10^{-4}$, and different circuit powers $P_c$. The figure highlights the energy efficiency gap between long packet transmission which is analyzed via Shannon capacity model and the finite blocklength model. The EEE of short packets is less than the infinite blocklength Shannon's model by about 20\% in this case. Moreover, the figure shows that the EEE declines when the delay exponent becomes more strict and when the consumed power in circuitry is higher. 

\begin{figure}[!t] 
	\centering
	\includegraphics[width=1\columnwidth]{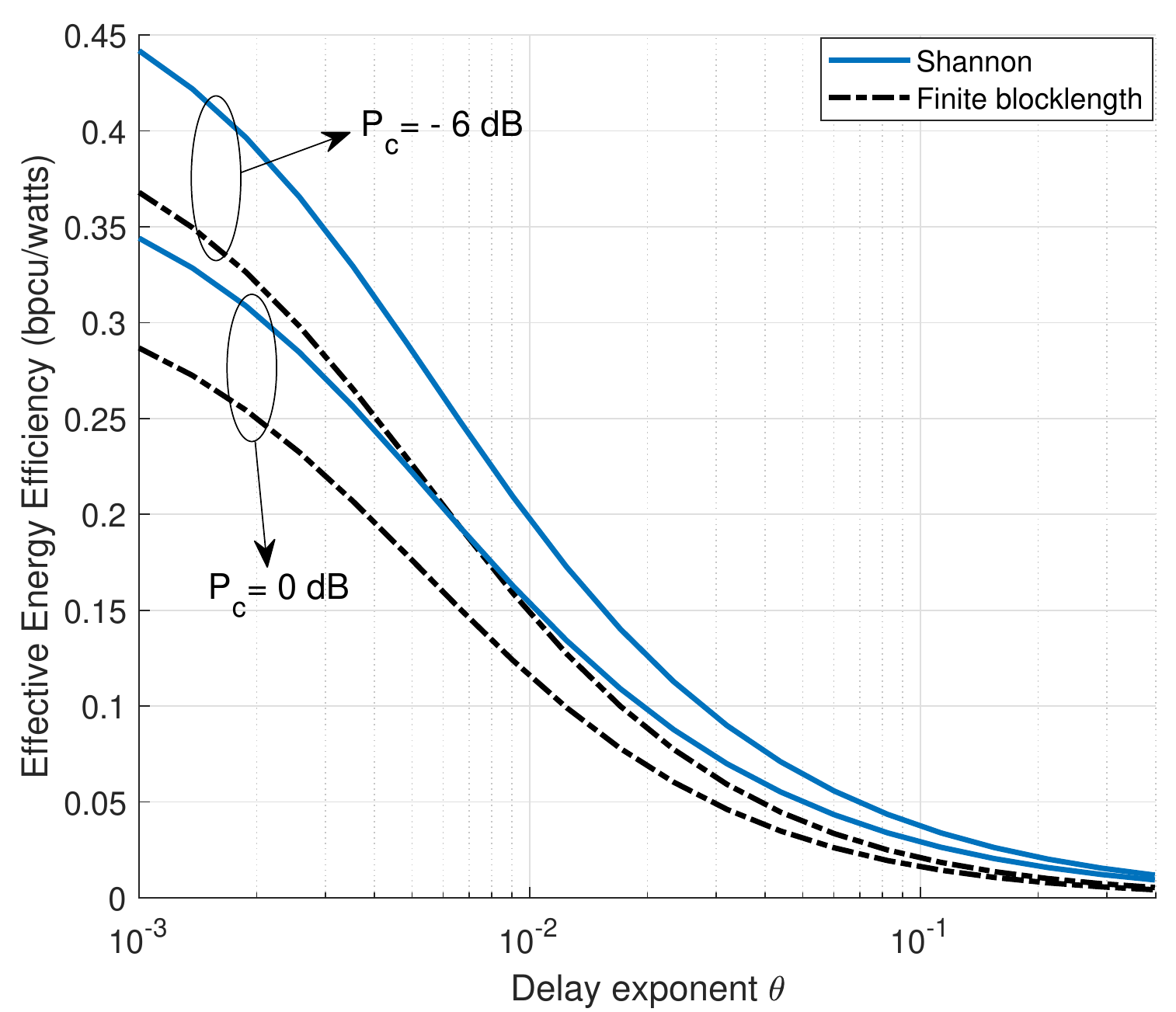}
	\centering
	\vspace{-2mm}
	\caption{Effective energy efficiency as a function of the delay exponent $\theta$ in quasi-static Rayleigh fading for $n=500$, $\rho$=3 dB, error probability $\epsilon=10^{-4}$, and different circuit powers $P_c$.}
	\vspace{-2mm}
	\label{f1}
\end{figure}

In Fig. \ref{f2}, we elucidate the EEE for different transmission probabilities (i.e, when the buffer is not empty). The figure shows the EEE gap between infinite and finite blocklength models. Again, it is noted that higher circuit power significantly deteriorates the EEE. It is observed that the EEE monotonically decreases with the increase of arrival rate (or alternatively Non-EBP) which indicates higher congestion in the network. However, this effect becomes marginal when the circuit power is higher as the circuit power becomes a dominant factor in the calculation of EEE. Thus, for $P_c= 0$ dB, the EEE is nearly constant as a function of the arrival rate. Therefore, careful studying of EEE for different source arrival rates is crucial for low circuit power.

\begin{figure}[!t] 
	\centering
	\includegraphics[width=0.95\columnwidth]{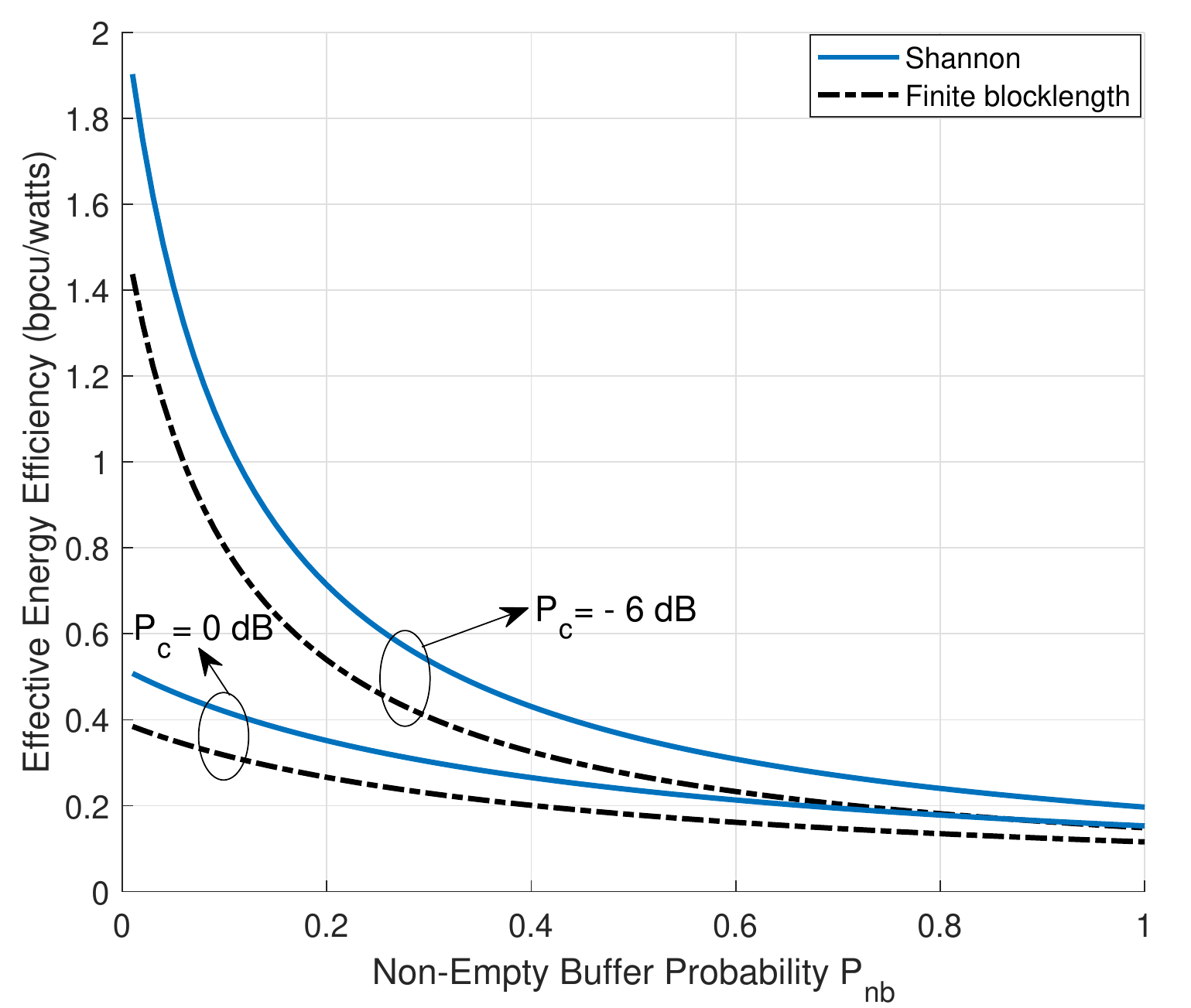}
	\centering
	\vspace{-2mm}
	\caption{Effective energy efficiency vs as a function of the non-EBP $P_{nb}$ in quasi-static Rayleigh fading for $n=500$, SNR=3 dB, error probability $\epsilon=10^{-4}$, $\theta=0.01$ and different circuit powers $P_c$.}
	\vspace{-2mm}
	\label{f2}
\end{figure}

For the following simulations, we fix the network parameters as follows: $\Lambda=\left\lbrace 10^{-2}, 10^{-3}\right\rbrace , P_c=0.2 \ W, \ \zeta=1.2, \ \lambda=1, \ \delta=500$ symbol periods, and $n=500$ symbol periods, unless stated otherwise. In Fig. \ref{EEE_e}, we evaluate the EEE as a function of error probability $\epsilon$ in case of EBP and compare it to the case where the buffer is always full while fixing the transmit power at $\rho=10$ dB. We observe that the EEE is concave in $\epsilon$ as stated in Remark \ref{lemma 3}. It is obvious that considering the probability of empty buffer reflects a gain in the EEE over the full buffer model, while decreasing the delay outage probability reduces the EEE. Moreover, the figure depicts that Shannon's model considered in \cite{paper7} overestimates the EEE by more than $20 \%$ when compared to the finite blocklength model.  
\begin{figure}[!t] 
	\centering
	\includegraphics[width=1\columnwidth]{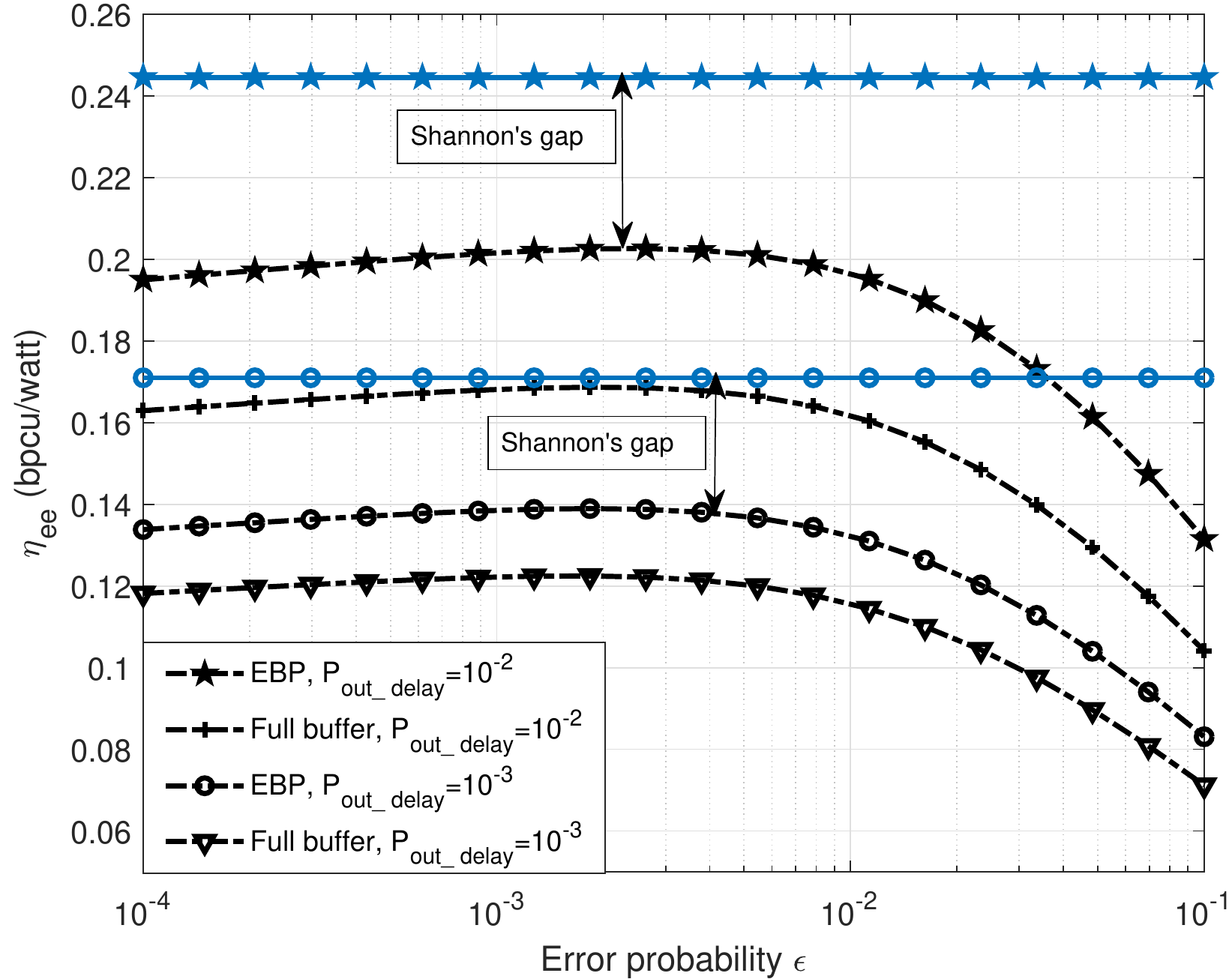}
	\centering
	\vspace{-2mm}
	\caption{EEE vs $\epsilon$ with and without empty-buffer probability for $\Lambda=10^{-2}, 10^{-3}, P_c=0.2, \zeta=1.2, \lambda=1, \delta=500$, and $n=500$.}
	\vspace{-2mm}
	\label{EEE_e}
\end{figure}

Fig. \ref{EEE_Dmax} depicts the maximum achieved EEE obtained from (\ref{op2}) for different delay limits $\delta$, where $\rho_{max}=13$ dB (variable transmission power), and $\epsilon_t=10^{-4}$. We observe that the EEE increases when extending the delay bound $\delta$ and relaxing the delay outage probability $\Lambda$. This implies that networks which can tolerate longer packet transmission delay are more energy efficient. From another perspective, it is clear that the sporadic non-EBP transmission scenario allows for a better modelling of the power consumption in MTC. This reflects that full buffer is the worst case, where we assume that all power will be consumed. Meanwhile, the Non-EBP models the fraction of time that is actually used for transmission of packets according to the queue congestion, which interprets the gain of this model compared to always full buffer. Furthermore, the figure verifies the inaccuracy of Shannon's model when computing the EEE for relatively small packets where the inaccuracy gap reaches more than $30 \%$ in higher delay region.

\begin{figure}[!t] 
	\centering
	\includegraphics[width=0.98\columnwidth]{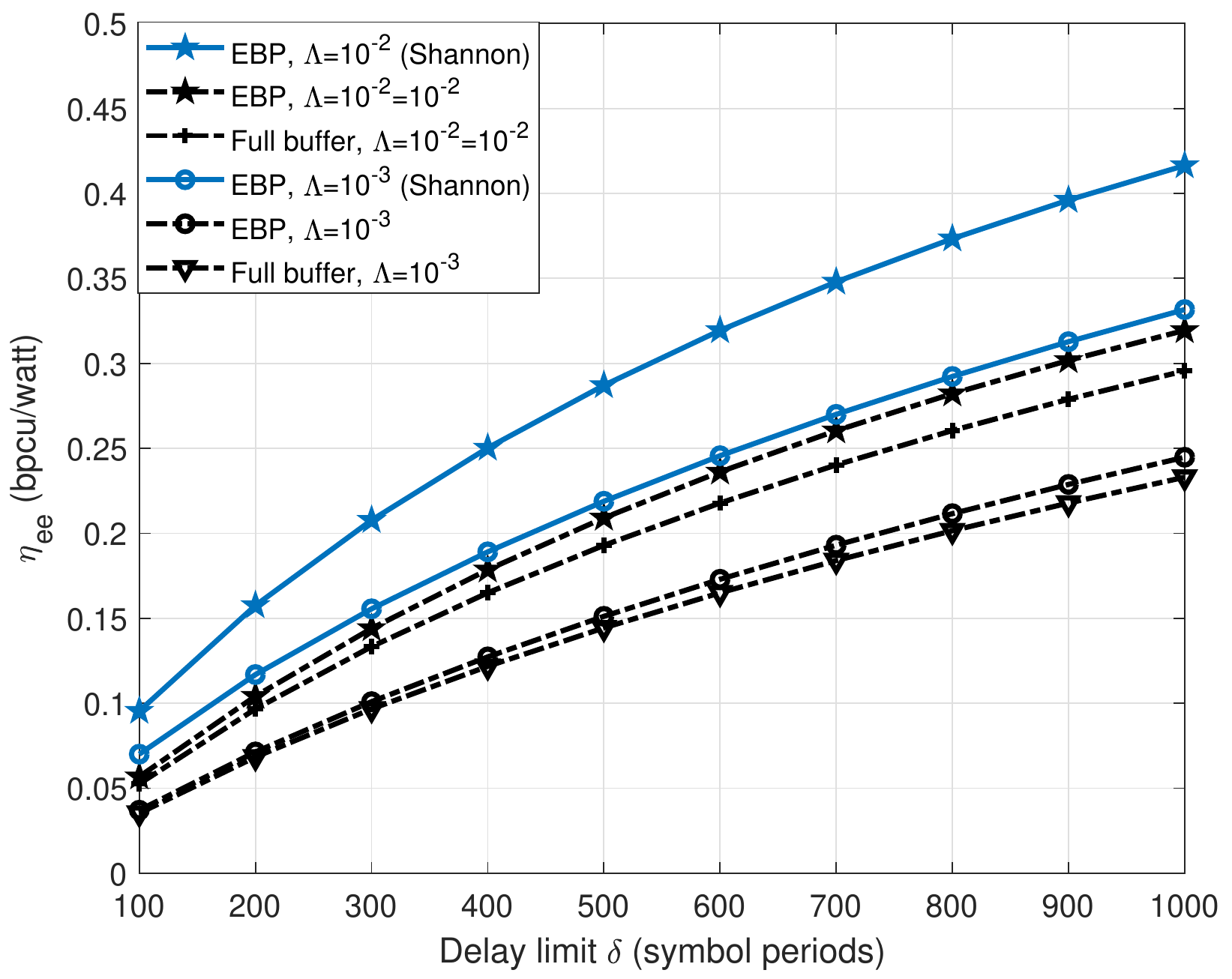}
	\centering
	\vspace{-2mm}
	\caption{EEE vs $\delta$ with and without empty buffer probability for $\Lambda=10^{-2}, 10^{-3},\ P_c=0.2 \ W, \ \zeta=1.2, \ \lambda=1$, $n=500$, and $\epsilon_t=10^{-3}$.}
	\vspace{-2mm}
	\label{EEE_Dmax}
\end{figure}

In order to present an insight about how EBP would affect the performance of multi-user network, we consider a simple exemplary setup where 2 users transmit short packets to a common BS. The BS applies successive interference cancellation where User 1 is the primary user assumed to be ultra reliable with $\epsilon_{u1}=10^{-4}$, and therefore decoded last while it transmits with higher power 6 dB. Meanwhile, User 2 is the secondary which has lower priority, where it transmits with low transmit power of 0 dB, and reliability of $\epsilon_{u1}=0.1$. Fig. \ref{multi} depicts that the NBP of User 1 does not only affect the EEE of User 1, but also affects both the EEE and NBP of User 2. 

Taking a close look at Fig. \ref{multi}, we observe that adjusting the arrival rate of User 1 to a lower level reduces its NBP probability and improves the EEE of both users. Meanwhile, the NBP probability of User 2 increases when the buffer of User 1 is more busy. This happens because User 2 suffers from excess interference from User 1, which forces User 2 into reducing its transmission rate. Hence, packets accumulate in the buffer of User 2 which in turn becomes more congested.

\begin{figure}[!t] 
	\centering
	\includegraphics[width=1\columnwidth]{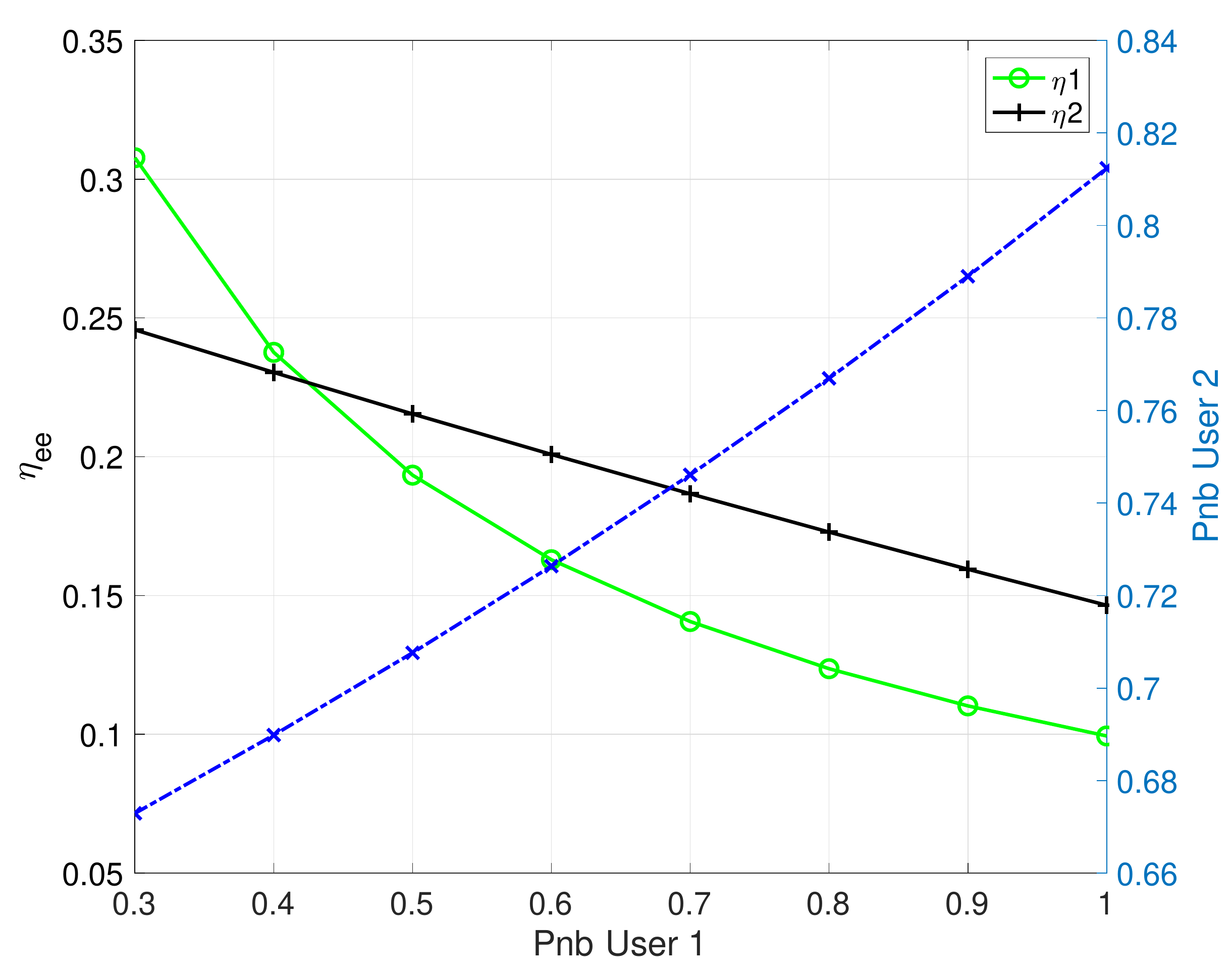}
	\centering
	\vspace{-3mm}
	\caption{Multi-user performance evaluation with EBP for $\epsilon_{u1}=10^{-4}$, $\epsilon_{u2}=0.1$, $\rho_1=6$ dB, $\rho_2=0$ dB, $\theta=0.01, P_c=0.2, \zeta=1.2, \delta=500$, $\lambda_2=0.5$ and $n=500$.}
	\vspace{-2mm}
	\label{multi}
\end{figure}

In Fig. \ref{optimal_power}, we plot the optimum power allocation for maximizing the EEE as a function of the maximum delay $\delta$ in case of EBP and always full buffer where $\rho_{max}=10$ dB. The target error outage probability is fixed at $\epsilon=10^{-4}$. The plot shows that the optimal power allocation is significantly higher when the delay outage probability $\Lambda$ is lower and when EBP is considered. The figure also depicts that Shannon's model does not render an accurate power allocation to maximize the EEE; in fact, it underestimates the optimum power allocation when compared to the finite blocklength model. The power gap ranges from $2$ to $4$ dB as shown in the figure. Thus, we can exploit the extra power allocation that results from considering empty buffer and applying the finite blocklength model in order to efficiently boost the EC. It is also observed that the optimal power allocation increases when the delay tolerance becomes higher. The intuition behind this is that when the network tolerates higher delays, it allows for improving the throughput by allocating higher power without wasting the network resources. This improvement occurs in the same way when considering empty buffer probability and when increasing the line of sight (e.g. Ricean fading where $m>1$).

\begin{figure}[!t] 
	\centering
	\includegraphics[width=1\columnwidth]{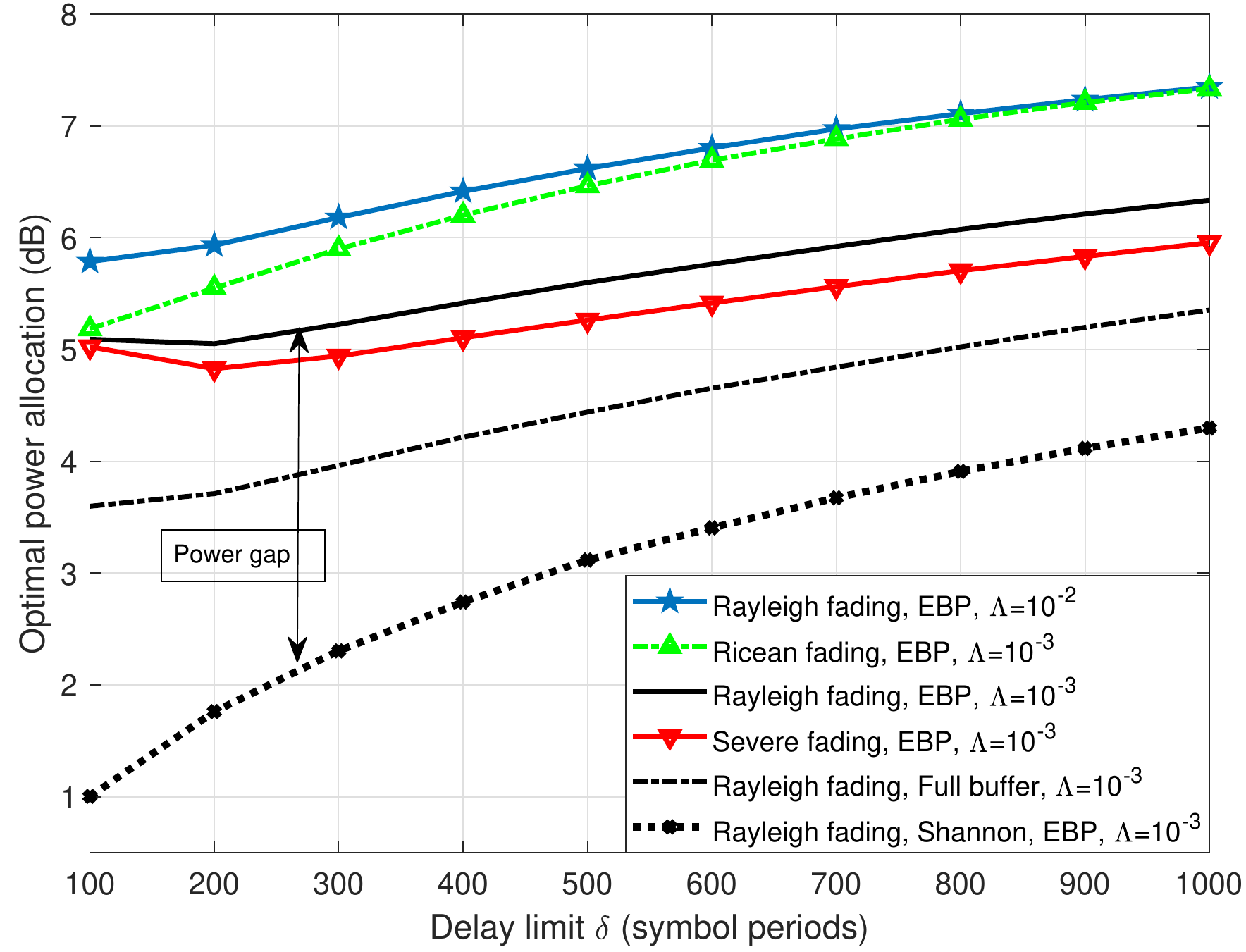}
	\centering
	\vspace{-2mm}
	\caption{Optimal power allocation vs $\delta$ with and without empty-buffer probability for $\Lambda=10^{-2}, 10^{-3}, P_c=0.2, \zeta=1.2, \lambda=1, \rho_{max}=13$ dB  and $\epsilon=10^{-4}$.}
	\label{optimal_power}
	\vspace{-2mm}
\end{figure}

In Fig. \ref{EBP_retx}, we illustrate the EEE gain of the EBP retransmission scenario. The system parameters are $\rho=6$ dB, $\epsilon=10^{-9}, \lambda=0.5$. The figure shows that our proposed EBP scheme with ARQ enhances the EEE when compared to the classical EBP and full buffer models. In fact, the EEE of EBP model with ARQ is more than double of the normal EBP case when the delay constraint is very strict and the delay exponent approaches high values at $\theta>0.1$. Thus, the EEE gain is more relevant for delay stringent networks. In this case, the EEE is upper bounded by 1.07 (bpcu/watt) as obtained from Theorem \ref{ub}. The plot also compares different error allocation strategies for the first and second transmission rounds. It is obvious that the equal error allocation is not optimal enough to maximize the EEE. However, the minimum transmit power strategy highly approaches EEE optimality.

\begin{figure}[!t] 
	\centering
	\includegraphics[width=1\columnwidth]{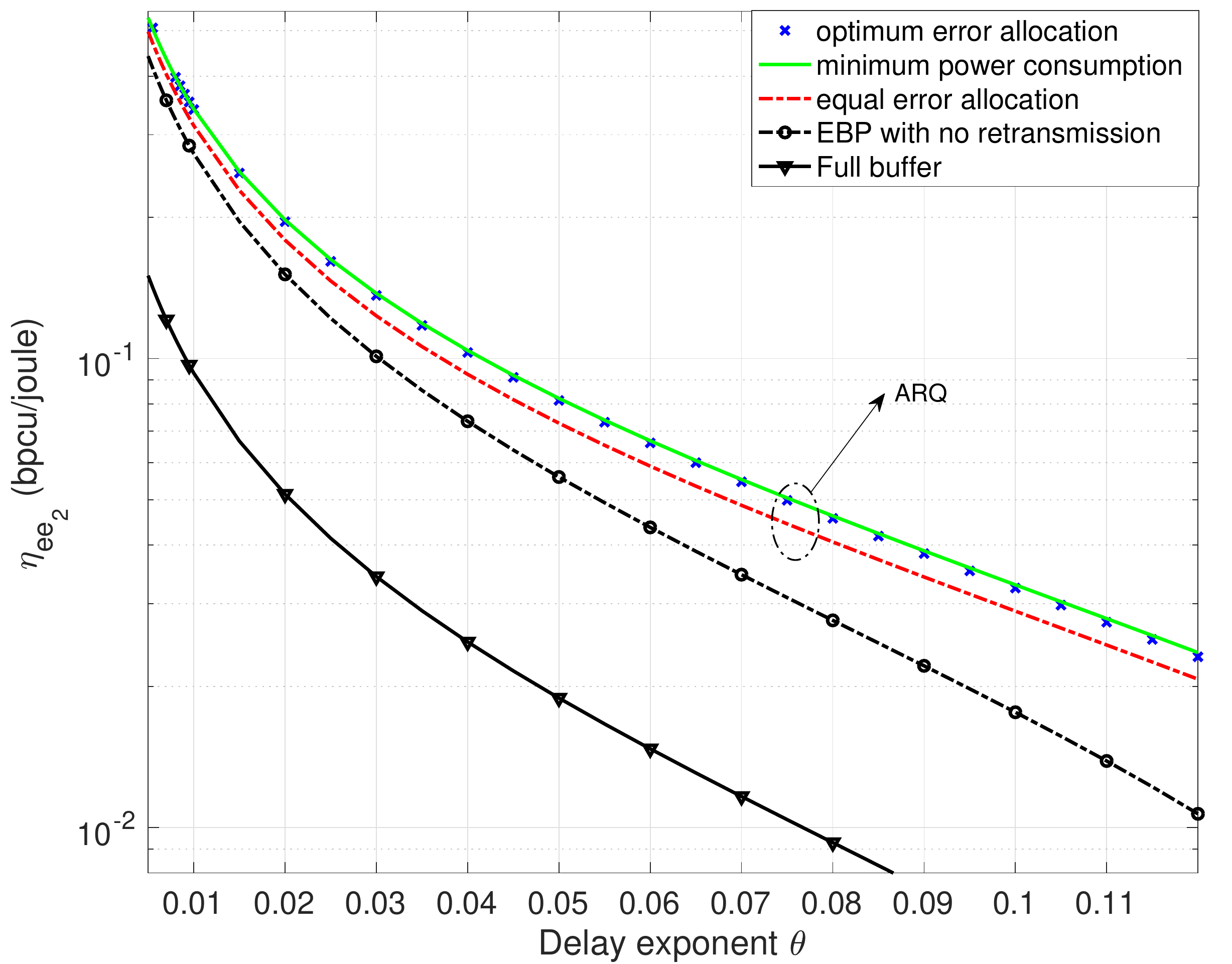}
	\vspace{-1mm}
	\caption{Effective energy efficiency $\eta_{ee}$ for different delay exponents $\theta$, where $\rho=6$ dB, $n=500,\epsilon=10^{-9}$, $P_c=0.2, \zeta=1.2, \lambda=0.5$.}
	\label{EBP_retx}
	\vspace{-1mm}
\end{figure}

In Fig. \ref{pc}, we depict the total power consumption accompanied by each scenario as function of the arrival rate. The network parameters are $\rho=6$ dB, $n=500,\epsilon=10^{-9}, \theta=0.01$, where $\lambda$ is varied this time as shown on the figure axis. Although this effect is marginal for the full buffer model, the figure depicts that higher arrival rates consume more power as there are more packets to transmit. However, the power consumption is lower for the EBP model. Despite retransmissions which consume high power, the EBP model with retransmissions is the most power saving scheme, since the packets are transmitted with higher error probabilities for each single transmission which boosts the service rate and reduces the traffic congestion at the buffer and hence, the power consumption of the whole network.

\begin{figure}[!t] 
	\centering
	\includegraphics[width=0.95\columnwidth]{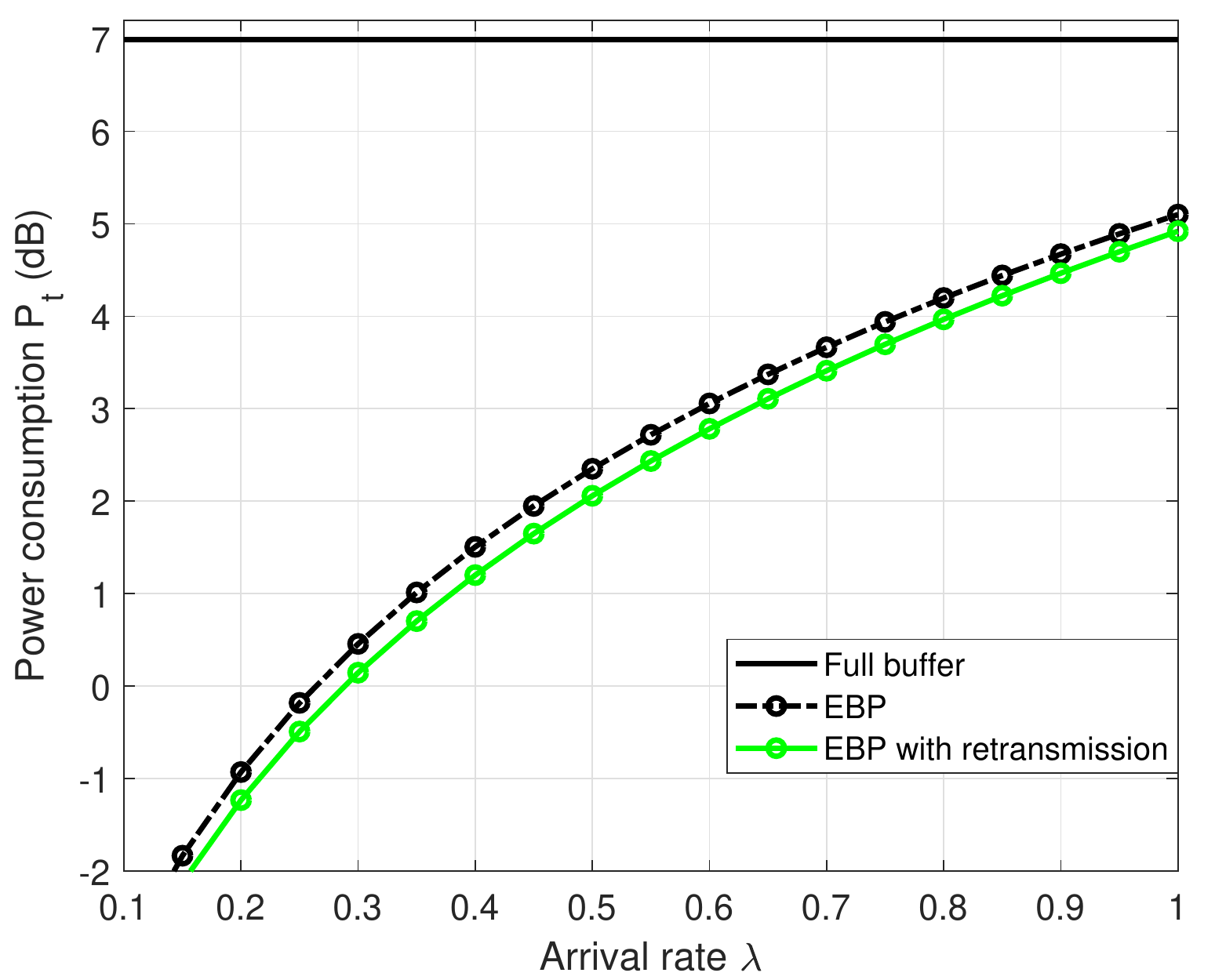}
	\vspace{-1mm}
	\caption{Power consumption $P_t$ for different arrival rates $\lambda$, where $\rho=6$ dB, $n=500,\epsilon=10^{-9}$, $P_c=0.2, \zeta=1.2, \theta=0.01$.}
	\label{pc}
	\vspace{-1mm}
\end{figure}

Finally, Fig. \ref{average_delay} illustrates the normalized delay $\tau_n$ as a function of arrival rate $\lambda$ for the EBP-ARQ scheme with two transmissions. The figure shows that the normalized delay $\tau_n$ diminishes as the network traffic becomes higher. This is because, when the queue becomes more congested, there are fewer chances for the buffer to become empty and hence, the opportunity for a second transmission disappears. Hence, the delay becomes only one transmission delay which is lower than the delay in case of two transmissions. It is noted that the delay becomes worse for lower reliability requirement as the error is also relaxed in the first transmission which leads to higher probability of occurrence for the second transmission and longer delay. 

Moreover, for the same reliability constraint, boosting the transmit power does not reduce latency. This is because for this high power, if the network becomes more congested, it slightly affects the non-EBP which maintains its low value and allows for second transmission which causes longer delay. The delay in its worst case is still only 3\% higher than the delay of one transmission scheme. Hence, we obtain significantly higher EEE with only limited increase in the delay, while maintaining reliability at the same level.
\begin{figure}[!t] 
	\centering
	\includegraphics[width=1\columnwidth]{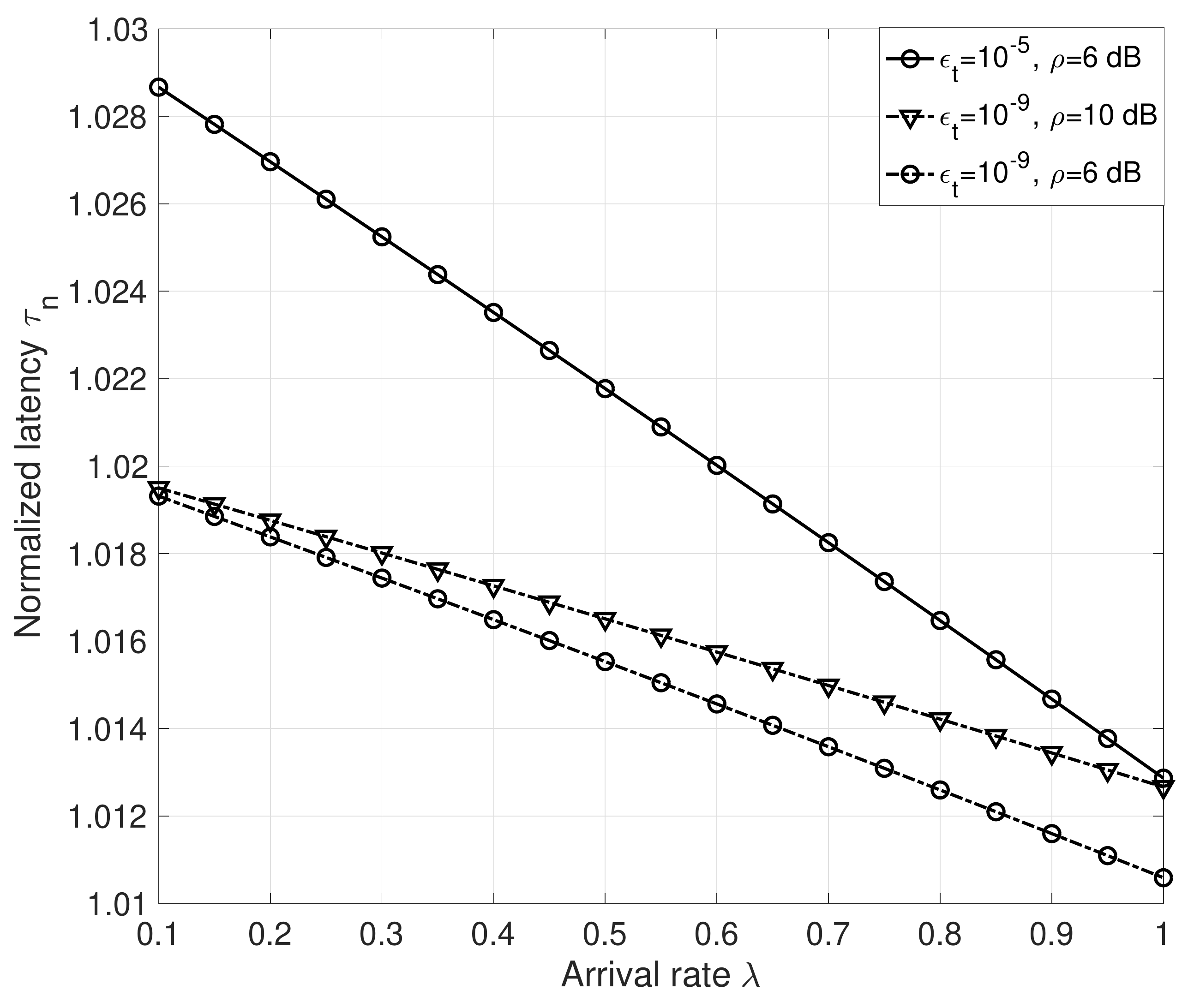}
	\vspace{-1mm}
	\caption{Normalized delay $\tau_n$  for different arrival rates $\lambda$, where $n=500, P_c=0.2, \zeta=1.2, \theta=0.01$.}
	\label{average_delay}
	\vspace{-1mm}
\end{figure}

\section{Conclusions} \label{conclusion}
In this work, we presented a detailed analysis of the EEE for ultra reliable delay constrained networks in the finite blocklength regime. For Nakagami-$m$ quasi-static fading channels, we proposed an approximation for the EC. Then we characterized the EEE maximizers in terms of optimum error probability and power allocation for the Rayleigh fading case. The results revealed that Shannon's model overestimates the EEE and underestimates the optimum power allocation when compared to the finite blocklength model. Further results indicated that allowing for larger delays significantly boosts EEE. We showed that the advantage of considering non-empty buffer probability and flexible transmission power is twofold since it significantly improves the EEE of networks operating in the finite blocklength regime and allows for retransmission of faulty packets with a significant boost in the EEE, reliability, and limited delay extension. For the EBP retransmission scenario, we derived the upper bound of the EEE and provided a low complexity solution for the optimization problem of maximizing the EEE and minimizing the power consumption. The solution showed that EBP model with retransmission is an ultra-reliable power saving scheme which improves the energy efficiency with limited increase in latency. Better performance and higher EEE gain could be achieved by applying Chase Combining (CC-HARQ) and Incremental redundancy (IR-HARQ) protocols \cite{Dosti_ARQ}. This is left as possible extension for this work along with the analysis of EEE in multi-user networks as in Fig. \ref{multi}.

\appendices 
\section{PROOF OF Lemma 1}
	Using (\ref{pdf}) in (\ref{EC}), we attain 
\begin{align}\label{EC2}
\begin{split}
C_e(\rho,\theta,\epsilon)&= \\ -\frac{1}{n\theta} \ln&\left(\frac{m^m}{\Gamma (m)}\int_{0}^{\infty}
\left( \epsilon+(1-\epsilon)e^{-\theta n r}\right)  z^{m-1} e^{-mz} dz\right).  
\end{split}
\end{align}
From (\ref{eq3}), we have		
\begin{align}\label{e1}
e^{-\theta n r}=e^{-\theta n \log_2(1+\rho z)}e^{\theta \sqrt{n(1-\frac{1}{(1+\rho z)^{2}})} Q^{-1}(\epsilon)\log_2e}, 
\end{align}	
where \vspace{-3mm}
\begin{flalign}\label{e2}
e^{-\theta n \log_2(1+\rho z)}
&=(1+\rho z)^{\alpha}
\end{flalign}
\begin{align}\label{e3}
e^{\theta \sqrt{n(1-\frac{1}{(1+\rho z)^{2}})} Q^{-1}(\epsilon)\log_2e}=e^{\beta \gamma}.  
\end{align}
We resort to the Taylor expansion to obtain $e^{cx} = \sum_{n=0}^\infty \frac{(cx)^n}{n!}$. It follows from (\ref{e1}),(\ref{e2}) and (\ref{e3}) that the expression in (\ref{EC2}) can be written as 
\begin{align}\label{general2}
\begin{split}
&EC(\rho_i,\theta,\epsilon)=-\frac{1}{n\theta} \ln\left[  \int_{0}^{\infty}\epsilon \frac{m^m}{\Gamma (m)} \  z^{m-1} e^{-mz} dz\right. \\
& \left.+(1-\epsilon)\int_{0}^{\infty} \frac{m^m}{\Gamma (m)}
(1+\rho_i z)^{\alpha}\sum_{n=0}^\infty \frac{(\beta\gamma)^n}{n!}z^{m-1} e^{-mz} dz\right].
\end{split}
\end{align}
The infinite series in (\ref{general2}) can be truncated to a finite sum of terms and we evaluate the accuracy of the expression noting that the accuracy increases with the number of terms. But, it is noticed that when testing for 
different system parameters ($N$, $\rho$, $\theta$, $n$), the accuracy for expanding 1 term is 92.7$\%$, 2 terms is 99$\%$ and 99.9$\%$ for 3 terms only. Henceforth, in our analysis, 3 terms will be enough and (\ref{general2}) reduces to (\ref{general}).
\section{PROOF OF Theorem 1}
	The coefficients of Rayleigh channel are distributed according to the probability density function (PDF) $f_Z(z)=e^{-z}$. This corresponds to a Nakagami-$m$ fading parameter $m=1$. Applying the second order Taylor expansion to obtain $e^{\beta \gamma} = 1+(\beta \gamma)+\frac{(\beta \gamma)^2}{2}$, it follows from Theorem 1 that for Rayleigh distributed channels
	\begin{align}\label{general3}
	\begin{split}
	&\psi(\rho,\theta,\epsilon)=  \epsilon 
	+(1-\epsilon)\left[ \int_{0}^{\infty}
	(1+\rho z)^{\alpha}e^{-z} \mathrm{d}z + \right. \\
	&\left. \beta\int_{0}^{\infty}
	(1+\rho z)^{\alpha} \gamma e^{-z} \mathrm{d}z + \frac{\beta^2}{2}\int_{0}^{\infty}
	(1+\rho z)^{\alpha} \gamma^2 e^{-z} \mathrm{d}z  \right].  
	\end{split}
	\end{align}
	The first integral can be written as $e^{\frac{1}{\rho}}  \rho^\alpha \Gamma\left(\alpha+1,\frac{1}{\rho} \right)$ By applying Laurent's expansion for $\gamma$ \cite{Complex_Analysis}, we obtain $\gamma\approx1-\frac{1}{2\left(1+\rho z \right)^2 }$. Hence, the second and third integrals can be written as  
	$e^{\frac{1}{\rho}} \beta \rho^\alpha  \left( \Gamma\left(\alpha+1,\frac{1}{\rho} \right)-\frac{\Gamma\left(\alpha-1,\frac{1}{\rho}\right) }{\rho^{2}}\right) $, and $e^{\frac{1}{\rho}} \frac{\beta^2}{2} \rho^\alpha  \left( \Gamma\left(\alpha+1,\frac{1}{\rho} \right)-\frac{\Gamma\left(\alpha-1,\frac{1}{\rho}\right) }{\rho^{2}}\right) $, respectively leading to (\ref{c2.2}).
	
\section{PROOF OF Theorem 2}
Since the logarithmic term is dominant in the rate equation given in (\ref{eq3}), it is quite straightforward to verify that the rate function has a negative second derivative for practical rate and SNR regions and therefore is concave in transmit power. The mathematical proof proceeds as follows. First, let $\phi=\frac{Q^{-1}(\epsilon)\log_2(e)}{\sqrt{n}}$ and note that $\phi$ should be a strictly positive parameter. Moreover, $\phi$ is less than unity for practical values of $n$ and $\epsilon$, where $n\geq1$ and $\epsilon\leq0.1$, which in fact are guaranteed for URLLC operation where n>100 and $\epsilon<10^{-4}$ \cite{PopovskiURLLC2019}. This dictates that the denominator of $\phi$ is higher than its numerator since $\sqrt{n}$ will be large enough to exceed $Q^{-1}(\epsilon)\log_2(e)$. Then from \eqref{eq3}, we have
\begin{align} \label{e4}
\frac{\partial r}{\partial \rho}=\frac{z}{ (1+\rho z) \log 2}-\frac{\phi z}{\left( 1+\rho z\right) ^3 \sqrt{ 1-\frac{1}{\left( 1+\rho z\right) ^{2}} }},
\end{align}
\begin{align} \label{e6}
\frac{\partial^2 r}{\partial \rho^2}&=\frac{3\phi z^2}{ \left( 1+\rho z\right) ^4 \sqrt{ 1-\frac{1}{\left( 1+\rho z\right) ^{2}} }}  \notag \\
&+ \frac{\phi z^2}{\left( 1+\rho z\right) ^6 \left( 1-\frac{1}{\left( 1+\rho z\right) ^{2}} \right)^\frac{3}{2}}-\frac{z^2}{\left( 1+\rho z\right) ^2 \log 2},
\end{align}
which is dominated by the negative term, since the other terms are multiplied by $\phi\lessapprox1$ and raised to a high power in the denominator, and thus vanish faster. This firmly holds for non-extremely low SNR (i.e, $\geq-10$ dB) regions. Following a similar procedure as in \cite{paper8} based on \cite{EEE_concave}, we can conclude that the EEE in the finite blocklength regime is also a quasi-concave function of power and strictly concave in its upper contour.

\section{PROOF OF Lemma 2}
	For $\rho=0$, the achievable rate $r=0$ and the numerator of (\ref{EEE0}) becomes 0. Applying L'Hopital's rule for the denominator, we have   
	\begin{align}\label{lim1}
	\begin{split}
	&\lim\limits_{\rho\rightarrow 0} \frac{\rho}{\mathop{\mathbb{E}}\left[ r\right] }= \lim\limits_{\rho\rightarrow 0}\frac{1}{\mathop{\mathbb{E}}\left[ z\left(\frac{1}{(1+\rho z) \ln 2}-\frac{Q^{-1}(\epsilon)\log_2(e)}{\sqrt{n}\left( 1+\rho z\right)^3 \gamma} \right)\right] }=0. \\
	\end{split}
	\end{align}
	Thus the denominator of (\ref{EEE0}) equals to $P_c$ yielding 0 for the EEE. 
	
	For the second condition, the numerator of (\ref{EEE0}) is upper bounded by $-\frac{\ln \epsilon }{n \theta}$, while L'Hopital's rule for the denominator, we obtain
	\begin{align}\label{lim2}
	\begin{split}
	&\lim\limits_{\rho\rightarrow \infty}\frac{1}{\mathop{\mathbb{E}}\left[ z\left(\frac{1}{(1+\rho z) \ln 2}-\frac{Q^{-1}(\epsilon)\log_2(e)}{\sqrt{n}\left( 1+\rho z\right)^3 \gamma} \right)\right] }=\infty. \\
	\end{split}
	\end{align}	
	Thus, the denominator of (\ref{EEE0}) tends to infinity which nulls the EEE. Hence, (\ref{EEE1}) holds as well under finite blocklength regime, which concludes the proof.

\section{PROOF OF Theorem 4}	

Since the denominator of the EEE does not depend on $\theta$, so the problem is to define the limits of the EC. First, we define the limit of the EC for one transmission scenario and extremely strict delay constraint, where $\theta \rightarrow \infty$ as
\begin{align} \label{e30.6}
\lim\limits_{\theta\rightarrow \infty}C_{e1}(\epsilon) =\lim\limits_{\theta\rightarrow \infty} \frac{-1}{n\theta} \ln\left(\mathop{\mathbb{E}_{Z}}\left[\epsilon+(1-\epsilon)e^{-n\theta r(\epsilon)}\right]\right)=0,
\end{align} 
which means that the EC vanishes as the delay constraint becomes infinitely strict and consequently, the EEE vanishes too. A similar procedure shows the same zero lower bound for the retransmission scenario. Next, we define the limit of the EC for loose delay constraint, where $\theta \rightarrow 0$ as follows
\begin{align} \label{e30.8}
\lim\limits_{\theta\rightarrow 0}C_{e1}(\epsilon) \!=\!\lim\limits_{\theta\rightarrow 0} -\frac{f(\theta)}{g(\theta)}\!=\!\lim\limits_{\theta\rightarrow 0}-\frac{\frac{\partial f(\theta)}{\partial \theta}}{\frac{\partial g(\theta)}{\partial \theta}}\!=\!\lim\limits_{\theta\rightarrow 0}-\frac{\frac{\partial f(\theta)}{\partial \theta}}{n},
\end{align} 
which follows from L'Hopital rule, where $f(\theta)=\ln\left(\mathop{\mathbb{E}_{Z}}\left[\epsilon+(1-\epsilon)e^{-n\theta r(\epsilon)}\right]\right)$ and $g(\theta)=n\theta$. Differentiating $f(\theta)$ with respect to $\theta$, we get
\begin{align} \label{e30.10}
\frac{\partial f(\theta)}{\partial \theta}&=\frac{\frac{\partial }{\partial \theta}\left\lbrace \mathop{\mathbb{E}_{Z}}\left[\epsilon+(1-\epsilon)e^{-n\theta r(\epsilon)}\right]\right\rbrace }{\mathop{\mathbb{E}_{Z}}\left[\epsilon+(1-\epsilon)e^{-n\theta r(\epsilon)}\right]}\notag \\
&=\frac{\frac{\partial }{\partial \theta}\int_{0}^{\infty}
	\left( \epsilon+(1-\epsilon)e^{-n \theta  r}\right)  e^{-z} dz}{\mathop{\mathbb{E}_{Z}}\left[\epsilon+(1-\epsilon)e^{-n\theta r(\epsilon)}\right]}.
\end{align} 
Applying Leibniz's rule, we obtain
\begin{align} \label{e30.12}
\frac{\partial f(\theta)}{\partial \theta}&=\frac{\int_{0}^{\infty}\frac{\partial }{\partial \theta}
	\left( \epsilon+(1-\epsilon)e^{-n \theta  r}\right)  e^{-z} dz}{\mathop{\mathbb{E}_{Z}}\left[\epsilon+(1-\epsilon)e^{-n\theta r(\epsilon)}\right]} \notag \\
&=\frac{-n(1-\epsilon)\mathop{\mathbb{E}_{Z}}\left[r e^{-n\theta r(\epsilon)}\right] }{\mathop{\mathbb{E}_{Z}}\left[\epsilon+(1-\epsilon)e^{-n\theta r(\epsilon)}\right]}.
\end{align} 
Plugging back into \eqref{e30.8}, we reach
\begin{align} \label{e30.14}
\lim\limits_{\theta\rightarrow 0}C_{e}(\epsilon) &=\lim\limits_{\theta\rightarrow 0}-\frac{\frac{-n(1-\epsilon)\mathop{\mathbb{E}_{Z}}\left[r e^{-n\theta r(\epsilon)}\right] }{\mathop{\mathbb{E}_{Z}}\left[\epsilon+(1-\epsilon)e^{-n\theta r(\epsilon)}\right]}}{n} \notag \\
&=(1-\epsilon)\mathop{\mathbb{E}_{Z}}\left[r(\epsilon)\right]= (1-\epsilon) r_0,
\end{align} 
which represents the upper bound throughput of the finite blocklength transmission when no delay constraint is imposed. Following the same procedure, we can deduce that the EC of the EBP ARQ scenario is given by the numerator of Theorem \ref{ub}.

\section{PROOF OF Theorem 5}	
	
	The proof for the pseudo-convexity of $p_{nb}^{'}$ in $\epsilon_1$ goes as follows. At the first glance, it is possible to prove that $\kappa$ is a concave function in $\epsilon_1$. First, let
	\begin{align} \label{e52}
	r(\epsilon)\approx&\log_2(1+\rho|h|^2)-\mu Q^{-1}(\epsilon) ,
	\end{align}
	where $\mu=\frac{\log_2(e)}{\sqrt{n}}\sqrt{ 1-\frac{1}{\left( 1+\rho|h|^2\right) ^{2}} }$. Applying the derivatives of the inverse Q-function from \cite{paper5}, we obtain the derivatives of the rate expectations with respect to $\epsilon_1$ as  
	\begin{align} \label{e54}
	\frac{\partial r_1}{\partial \epsilon_1}&=\mathbb{E}\left[\mu\right]  \sqrt{2\pi} e^{\frac{\left( Q^{-1}(\epsilon_1)\right) ^2 }{2}}, \\ 
	\frac{\partial^2 r_1}{\partial \epsilon_1^2}&=-\mathbb{E}\left[\mu\right]2\pi Q^{-1}(\epsilon_1)e^{\frac{\left( Q^{-1}(\epsilon_1)\right) ^2 }{2}}, \label{e56}   \\
	\frac{\partial r_2}{\partial \epsilon_1}&=-\frac{\epsilon}{\epsilon_1^2}\mathbb{E}\left[\mu\right]  \sqrt{2\pi} e^{\frac{\left( Q^{-1}(\epsilon_1)\right) ^2 }{2}}, \label{e58}  \\
	\frac{\partial^2 r_2}{\partial \epsilon_1^2}&=\frac{\epsilon}{\epsilon_1^2}\mathbb{E}\left[\mu\right]2\pi Q^{-1}(\epsilon_1)e^{\frac{\left( Q^{-1}(\epsilon_1)\right) ^2 }{2}}  \notag \\
	&+\mathbb{E}\left[\mu\right]  \sqrt{2\pi} e^{\frac{\left( Q^{-1}(\epsilon_1)\right) ^2 }{2}}\frac{2\epsilon}{\epsilon_1^3} \notag \\
	&=-\frac{\epsilon}{\epsilon_1^2} \frac{\partial^2 r_1}{\partial \epsilon_1^2} +\mathbb{E}\left[\mu\right]  \sqrt{2\pi} e^{\frac{\left( Q^{-1}(\epsilon_1)\right) ^2 }{2}}\frac{2\epsilon}{\epsilon_1^3}. \label{e60}
	\end{align}
	Then, we obtain the second derivative of $\kappa$ w.r.t $\epsilon_1$ as follows 
	\begin{align} \label{e61}
	\frac{\partial \kappa}{\partial \epsilon_1}=-r_1+\frac{\partial r_1}{\partial \epsilon_1}(1-\epsilon_1)+\frac{1}{2}\left(\epsilon_1\frac{\partial r_2}{\partial \epsilon_1}+r_2\right) 
	\end{align} \vspace{-6mm}
	\begin{align} \label{e62}
	\frac{\partial^2 \kappa}{\partial \epsilon_1^2}&=-2\frac{\partial r_1}{\partial \epsilon_1}+\frac{\partial^2 r_1}{\partial \epsilon_1^2}(1-\epsilon_1)+\frac{1}{2}\left(2\frac{\partial r_2}{\partial \epsilon_1}+\epsilon_1 \frac{\partial^2 r_2}{\partial \epsilon_1^2}\right)= \notag \\
	-2&\frac{\partial r_1}{\partial \epsilon_1}+\frac{\partial^2 r_1}{\partial \epsilon_1^2}(1-\epsilon_1)+\frac{1}{2}\left\lbrace \cancel{-2\frac{\epsilon}{\epsilon_1^2}\mathbb{E}\left[\mu\right]\sqrt{2\pi} e^{\frac{\left( Q^{-1}(\epsilon_1)\right) ^2 }{2}}}\right. \notag \\
	&\left. -\frac{\epsilon}{\epsilon_1} \frac{\partial^2 r_1}{\partial \epsilon_1^2} +\cancel{\epsilon_1\mathbb{E}\left[\mu\right]  \sqrt{2\pi} e^{\frac{\left( Q^{-1}(\epsilon_1)\right) ^2 }{2}}\frac{2\epsilon}{\epsilon_1^3}} \right\rbrace \notag \\
	&=-2\frac{\partial r_1}{\partial \epsilon_1}+\frac{\partial^2 r_1}{\partial \epsilon_1^2}(1-\epsilon_1-\frac{\epsilon}{2\epsilon_1}).
	\end{align}
	Note that practically, the term $\frac{\epsilon}{2\epsilon_1}$ is close to zero since $\epsilon_1>>\epsilon$, and hence, $(1-\epsilon_1)>\frac{\epsilon}{2\epsilon_1}$. Since the derivatives in \eqref{e54} and \eqref{e56} are strictly negative, we can deduce that the second derivative of $\kappa$ in \eqref{e62} renders a strictly negative value. Therefore, $\kappa$ is strictly concave in $\epsilon_1$. Back to equation \eqref{e78}, the denominator of the non-empty buffer probability is a linear decreasing function of $\kappa$, which indicates that the denominator is a convex function in $\epsilon_1$. The numerator can be proven to be a convex decreasing function of $T$ which means that the numerator is a convex function in $\epsilon_1$. First we obtain the derivatives of the numerator of $p_{nb}^{'}$ with respect to $T$ as follows
	
	\begin{align} \label{e14}
	\frac{\partial \ num\left\lbrace p_{nb}^{'}\right\rbrace }{\partial \kappa} &=\frac{2\kappa-4\lambda}{2\sqrt{\kappa^2+4\lambda(r_o-\kappa)}}-1, \\
	\frac{\partial^2 \ num\left\lbrace p_{nb}^{'}\right\rbrace }{\partial \kappa^2} &=\frac{4\lambda(r_o-\lambda)}{\left( \kappa^2+4\lambda(r_o-\kappa)\right) ^\frac{3}{2}},
	\end{align}
	which gives a strictly positive value and hence, the numerator of of $p_{nb}^{'}$ is convex in $\kappa$. Since the numerator of $p_{nb}^{'}$ is a non-increasing convex function in $\kappa$ and $\kappa$ is a concave function of $\epsilon_1$, it follows from \cite{Boyd} that it is a convex function in $\epsilon_1$, while the numerator is also a negative function. Hence and according to Proposition 2.9 in \cite{EEEbook}, minimizing $p_{nb}^{'}$ is a psuedo-convex fractional program. 

\bibliographystyle{IEEEtran}
\bibliography{di}
\end{document}